\newtheorem{theorem}{Theorem}
\newtheorem{lemma}{Lemma}
\theoremstyle{definition}
\newtheorem{definition}{Definition}
\pgfplotsset{compat=newest}
\newlength\figureheight
\newlength\figurewidth
\definecolor{myGreen}{rgb}{0,0.5,0}
\title{Rate region boundary of the SISO Z-interference channel with improper signaling}
\author{\IEEEauthorblockN{Christian Lameiro\authorrefmark{1},~\IEEEmembership{Member,~IEEE,} Ignacio Santamar\'ia\authorrefmark{2},~\IEEEmembership{Senior~Member,~IEEE,} and Peter J. Schreier\authorrefmark{1},~\IEEEmembership{Senior~Member,~IEEE}\\}
\thanks{\authorrefmark{1}C. Lameiro  and P. J. Schreier are with the Signal \& System Theory Group, Universit\"{a}t Paderborn, Germany (email: \{christian.lameiro, peter.schreier\}@sst.upb.de).}
\thanks{\authorrefmark{2}I. Santamar\'ia is with the Department of Communications Engineering,
University of Cantabria, Spain (e-mail: i.santamaria@unican.es).}}
\begin{document}
\maketitle
\begin{abstract}
This paper provides a complete characterization of the boundary of an achievable rate region, called the Pareto boundary, of the single-antenna Z interference channel (Z-IC), when interference is treated as noise and users transmit complex Gaussian signals that are allowed to be improper. By considering the augmented complex formulation, we derive a necessary and sufficient condition for improper signaling to be optimal. This condition is stated as a threshold on the interference channel coefficient, which is a function of the interfered user rate and which allows insightful interpretations into the behavior of the achievable rates in terms of the circularity coefficient (i.e., degree of impropriety). Furthermore, the optimal circularity coefficient is provided in closed form. The simplicity of the obtained characterization permits interesting insights into when and how improper signaling outperforms proper signaling in the single-antenna Z-IC. We also provide an in-depth discussion on the optimal strategies and the properties of the Pareto boundary.
\end{abstract}
\begin{keywords}
Improper signaling, Z interference channel, Pareto boundary.
\end{keywords}
\section{Introduction}
\label{sec:intro}
It is widely known that proper Gaussian signals are capacity-achieving in different wireless communication networks, such as the point-to-point, broadcast and multiple-access channels. Because of that, the use of such a signaling scheme is generally assumed in the study of multiuser wireless networks. The capacity-achieving property of proper signaling stems from the maximum entropy theorem, which states that the entropy of a random variable under a power constraint is maximized for a proper Gaussian distribution \cite{Neeser1993}. However, in networks where interference presents the major limiting factor, proper Gaussian signaling has recently been shown to be suboptimal, and improper Gaussian signaling, also known as asymmetric complex signaling, has been proved to outperform proper signaling in different interference networks \cite{Cadambe2010,LameiroSantamaria:2013:Degrees-of-Freedom-for-the-4-User-SISO-Interference,Wang2011,Shin2012,Yang2014,Ho2012,Zeng2013,Zeng2013twc,Nguyen2015,LagenMorancho2015,Hellings2013,Kim2012,LameiroSantamariaSchreier:2015:Benefits-of-Improper-Signaling-for-Underlay,LameiroSantamariaSchreier:2015:Analysis-of-maximally-improper-signalling,Amin2015,Kurniawan2015,Lagen2014,Lagen2016}.

An improper complex random variable differs from its proper counterpart in that its real and imaginary parts are correlated or have unequal variance, or, in other words, the random variable is correlated with its complex conjugate \cite{Schreier2010}. Such signals arise naturally in communications, e.g., due to gain imbalance between the in-phase and in-quadrature branches, or due to the use of specific digital modulations, such as binary phase shift keying (BPSK) or Gaussian minimum shift keying (GMSK). Whenever the received signal is improper, linear operations must be replaced by widely linear operations, which are linear in both the random variable and its complex conjugate, in order to fully exploit the correlation between the signal and its complex conjugate \cite{Schreier2010,Adali2011,Taubock2012}. The design of such widely linear receivers has been extensively studied in the literature (see, e.g., \cite{Gerstacker2003,Buzzi2006,Chevalier2006,Schober2004} and references therein). However, the transmission of improper signals to handle interference more effectively is a rather new line of research. 

We would like to add that digital modulation schemes yield cyclostationary signals, i.e., the mean, autocovariance and complementary autocovariance funtions are periodic. It has been shown that exploiting this property along with the impropriety leads to an improved performance \cite{Han2012,Yeo2014,Yeo2015,Yeo2015_2,Kim2016}. For example, \cite{Han2012} shows that the optimal transmit signal must also be cyclostationary if the receiver is corrupted by a cyclostationary Gaussian noise. 

The first study on the benefits of improper signaling for interference management was carried out in the 3-user interference channel (IC) \cite{Cadambe2010}. That work showed an improvement in terms of degrees-of-freedom (DoF), which represent the maximum number of interference-free streams and characterize the asymptotic sum-capacity. Similar DoF results were derived for the 4-user IC\cite{LameiroSantamaria:2013:Degrees-of-Freedom-for-the-4-User-SISO-Interference}, the 3-user multiple-input multiple-output (MIMO) IC \cite{Wang2011}, the interfering broadcast channel \cite{Shin2012}, and the MIMO X-channel \cite{Yang2014}. However, improper signaling not only increases the achievable DoF, but also the achievable rates in interference-limited networks. The optimal rate region boundary for maximally improper transmissions (i.e., perfect correlation between real and imaginary parts or either zero real or imagniary part) was derived for the 2-user IC in \cite{Ho2012}, showing substantial improvements over proper signaling. Additionally, \cite{Zeng2013} proposed a suboptimal design of the improper transmit parameters, which outperforms the proper and the maximally improper scheme. A similar suboptimal design was also proposed in \cite{Zeng2013twc} for the $K$-user multiple-input single-output (MISO) IC. Improper signaling in the IC has also been applied to reduce the symbol error rate \cite{Nguyen2015} and as a mixed improper/proper approach in the MIMO-IC \cite{LagenMorancho2015}. In addition to the IC, the use of improper Gaussian signaling has also been shown beneficial for other multiuser scenarios, such as the broadcast channel with linear precoding \cite{Hellings2013}, relay-assisted communications \cite{Kim2012}, or underlay cognitive radio networks \cite{LameiroSantamariaSchreier:2015:Benefits-of-Improper-Signaling-for-Underlay,LameiroSantamariaSchreier:2015:Analysis-of-maximally-improper-signalling,Amin2015}.

A particular case of the 2-user IC is the Z-IC, also known as one-sided IC \cite{Costa1985}. The difference with respect to the 2-user IC is the fact that only one of the receivers is affected by interference.\footnote{Notice that the Z-IC is different from the Z-channel, where the cross-link also conveys a desired message (see, e.g., \cite{Prasad2015} and references therein)} The capacity region of the Z-IC is only known in the strong and very strong interference regimes \cite{Sato1981,Costa1985}, and it is achievable by a non-linear operation at the receiver. Less complex non-linear techniques have also been studied (see, e.g., the characterization of the rate region boundary of the 2-user IC with successive interference cancellation at the receivers \cite{SongChoi2014}). Even for such techniques, it is not known whether proper signaling is optimal. Nevertheless, it is more convenient to perform linear (or widely-linear) operations while treating the interference as noise, so that the complexity can be reduced. Restricted to linear operations, improper signaling presents a useful tool to improve the performance over proper signaling. 

Improper signaling for the Z-IC has recently been considered in \cite{Kurniawan2015}, where the sum-rate maximizing scheme was derived in closed form. To that end, \cite{Kurniawan2015} considered the so-called real-composite model, where complex signals are regarded as real signals of double dimension. However, despite some remarkable efforts \cite{Hellings2015}, the real-composite model is not as insightful as the augmented complex model, which works with the signal and its complex conjugate. For example, the circularity coefficient, which measures the degree of impropriety, is a quantity easily derived in the augmented complex formulation, but is much more difficult to express through its real-composite counterpart. The multi-antenna Z-IC with improper signaling has been addressed in \cite{Lagen2014} and its journal version \cite{Lagen2016}. The inclusion of the spatial dimension makes a complete analytical assessment intractable, which is why the authors proposed an heuristic scheme to optimize the widely linear operation at the transmitter, which permits a trade-off between the rates of both users. This way, \cite{Lagen2016} obtained an achievable rate region that is larger than that obtained by proper signaling. Although the authors of \cite{Lagen2014,Lagen2016} mainly focused on the real-composite representation, they also considered the augmented complex formulation. Thus, the optimal transmission scheme for the interfered user is obtained using the former, whereas that of the interfering user through the latter.

In our work, we adopt the augmented complex model to provide a complete and insightful characterization of the optimal rate region boundary, called the Pareto boundary, of the single-antenna Z-IC, when users may transmit improper Gaussian signals, assuming that interference is treated as noise. Our main contributions are summarized next.

\begin{itemize}
	\item We extend the results of \cite{Kurniawan2015}, where only one point of the rate region boundary is derived, to provide a complete characterization of the Pareto optimal boundary in closed form. We show that the rate region boundary can be described by a threshold on the interference channel coefficient, which determines when improper signaling is optimal.
	\item By adopting the augmented complex formulation, we provide, for each point of the boundary, closed-form expressions for the transmit powers and circularity coefficients, which are a direct measure of the degree of impropriety of the transmit signals. This permits insightful conclusions and a full assessment of the improvements of improper signaling over proper signaling in the single-antenna Z-IC. Thus, we analyze how the degree of impropriety affects the rate in the different boundary points, and we investigate the conditions that must be fulfilled for improper signaling to outperform proper signaling. The connection between the optimal circularity coefficients of both users and a further in-depth discussion of our characterization is also provided.
\end{itemize}

The rest of the paper is organized as follows. Section \ref{sec:model} provides some preliminaries of improper random variables and describes the system model. The characterization of the rate region boundary is derived in Section \ref{sec:main}, and a discussion on the results is presented in Section \ref{sec:dis} along with several numerical examples illustrating our findings. Finally, Section \ref{sec:conc} concludes the paper.

\section{System model}\label{sec:model}
\subsection{Preliminaries of improper complex random variables}\label{sec:modelPre}
We first provide some definitions and results for improper random variables that will be used throughout the paper. For a comprehensive treatment of the subject, we refer the reader to \cite{Schreier2010}.

The \emph{variance} of a zero-mean complex random variable $x$ is defined as $\sigma
^2=\operatorname{E}[|x|^2]$, where $|\cdot|$ is the absolute value and $\operatorname{E}[\cdot]$ is the expectation operator. The \emph{complementary variance} of a zero-mean complex random variable $x$ is defined as $\tilde{\sigma}^2=\operatorname{E}[x^2]$. If $\tilde{\sigma}^2=0$, then $x$ is called \emph{proper}, otherwise \emph{improper}. Furthermore, $\sigma^2$ and $\tilde{\sigma}^2$ are a valid pair of variance and complementary variance if and only if $\sigma^2\geq0$ and $|\tilde{\sigma}^2|\leq\sigma^2$.

The \emph{circularity coefficient} of a complex random variable $x$ is defined as the absolute value of the quotient of its complementary variance and its variance, i.e.,
\begin{equation}
	\kappa=\frac{\left|\tilde{\sigma}^2\right|}{\sigma^2} \; .
\end{equation}
The circularity coefficient satisfies $0\leq\kappa\leq1$ and thus measures the degree of impropriety of $x$. If $\kappa=0$, then $x$ is \emph{proper}, otherwise \emph{improper}. If $\kappa=1$ we call $x$ \emph{maximally improper}.

\subsection{System description}
\begin{figure}[t!]
\centering
\includegraphics[width=1\columnwidth]{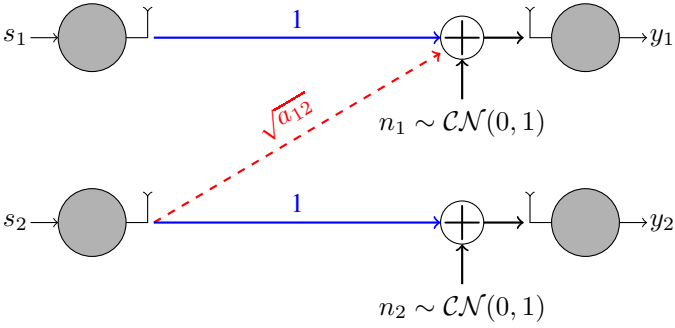}
\caption{SISO Z-IC in standard form. This model is described by three parameters: the power budgets $P_1$ and $P_2$, and the interference channel coefficient $a_{12}$.}
\label{fig:ZIC}
\end{figure}
We consider the single-input single-output (SISO) Z-IC with no symbols extensions. Without loss of generality and for the sake of exposition, we adhere to the standard form, as depicted in Fig. \ref{fig:ZIC}. Denoting by $\sqrt{a_{12}}$ the real channel coefficient between transmitter $2$ and receiver $1$, the signal at both receivers can be modeled by
\begin{align}
	y_1&= s_1+\sqrt{a_{12}} s_2+n_1 \; ,\\
	y_2&= s_2+n_2 \; ,
\end{align}
where $s_i$ and $n_i$ are the transmitted signal and noise of the $i$th user, respectively. The additive white Gaussian noise (AWGN) has variance 1 and is assumed to be proper, whereas the transmitted signals are complex Gaussian random variables with variance $\operatorname{E}[|s_i|^2]=p_i$  and complementary variance $\operatorname{E}[s_i^2]=\tilde{p}_i$. Thus, the rate achieved by each user, as a function of the design parameters $p_i$ and $\tilde{p}_i$, $i=1,2$, is given by \cite{Zeng2013}
\begin{align}
	R_1\left(p_1,\tilde{p}_1,p_2,\tilde{p}_2\right)=&\log_2\left(1+\frac{p_1}{1+p_2a_{12}}\right)\notag\\
	&+\frac{1}{2}\log_2\left(\frac{1-c_{y_1}^{-2}|\tilde{c}_{y_1}|^2}{1-c_{z_1}^{-2}|\tilde{c}_{z_1}|^2}\right) \; ,\label{eq:R1init}
\end{align}
\begin{equation}
	R_2\left(p_2,\tilde{p}_2\right)=\log_2\left(1+p_2\right)+\frac{1}{2}\log_2\left(\frac{1-c_{y_2}^{-2}|\tilde{c}_{y_2}|^2}{1-c_{z_2}^{-2}|\tilde{c}_{z_2}|^2}\right) \; , \label{eq:R2init}
\end{equation}
where
\begin{align}
	c_{y_1}&=p_1+p_2a_{12}+1 \; ,\\
	c_{y_2}&=p_2+1	 \; ,
\end{align}
are the variances of the received signals,
\begin{align}
	\tilde{c}_{y_1}&=\tilde{p}_1+\tilde{p}_2a_{12} \; ,\\
	\tilde{c}_{y_2}&=\tilde{p}_2 \; ,
\end{align}
the complementary variances of the received signals, 
\begin{align}
	c_{z_1}&=p_2a_{12}+1 \; ,\\
	c_{z_2}&=1 \; ,
\end{align}
are the variances of the interference-plus-noise signals, $z_1=\sqrt{a_{12}}s_2+n_1$, $z_2=n_2$, and
\begin{align}
	\tilde{c}_{z_1}&=\tilde{p}_2a_{12} \; ,\\
	\tilde{c}_{z_2}&=0 \; ,
\end{align}
the complementary variances of $z_1$ and $z_2$. Assuming that the power budget of the $i$th user is $P_i$, the achievable rate region with improper Gaussian signaling is then the union of all achievable rate tuples, i.e.,
\begin{equation}\label{eq:RateRegion}
	\mathcal{R}=\underset{\substack{0\leq p_i\leq P_i \\ |\tilde{p}_i|\leq p_i}\, , \, \forall i}{\bigcup}\left(R_1\left(p_1,\tilde{p}_1,p_2,\tilde{p}_2\right),R_2\left(p_2,\tilde{p}_2\right)\right) \; .
\end{equation} 
Notice that we have included the constraint $|\tilde{p}_i|\leq p_i$ in \eqref{eq:RateRegion}. As stated in Section \ref{sec:modelPre}, this condition must be fulfilled for $p_i$ and $\tilde{p}_i$ to be a valid pair of variance and complementary variance.

\section{Pareto boundary of the rate region}\label{sec:main}
The Pareto boundary of the rate region described by \eqref{eq:RateRegion} comprises all Pareto optimal points, which are defined as follows \cite{Jorswieck2008}.
\begin{definition}
	We call the rate pair $(R_1,R_2)$ Pareto-optimal if $(R_1',R_2)$ and $(R_1,R_2')$, with $R_1'>R_1$ and $R_2'>R_2$, are not achievable.\footnote{For the sake of brevity, we omit the dependence of $R_1$ and $R_2$ on the design parameters when it is self-evident or not relevant.} 
\end{definition}
In this section we characterize this boundary by deriving the optimal transmission parameters, $p_i$ and $\tilde{p}_i$, $i=1,2$, that achieve each point of the boundary.

First, we notice that, since user 1 does not interfere with user 2, its optimal transmit strategy maximizes its own achievable rate. Consequently, its transmit power must be maximized, which implies $p_1=P_1$. Second, $p_i$ and $\tilde{p}_i$ are a valid pair of variance and complementary variance if and only if $p_i\geq0$ and $|\tilde{p}_i|\leq p_i$. Consequently, the complementary variance can be expressed as $\tilde{p}_i=p_i\kappa_ie^{\jmath\phi_i}$, where $\kappa_i$ is the circularity coefficient, which measures the degree of impropriety. Hence, $|\tilde{p}_i|\leq p_i$ is equivalent to $0\leq\kappa_i\leq1$. With these considerations, $R_1$ can then be expressed as
\begin{align}
	 R_1=\frac{1}{2}\log_2&\left[\frac{\left(p_2a_{12}+P_1+1\right)^2}{\left(p_2a_{12}+1\right)^2-\left|p_2e^{\jmath\phi_2}\kappa_2a_{12}\right|^2}\right.\notag\\
	 &\left.-\frac{\left|p_2e^{\jmath\phi_2}\kappa_2a_{12}+P_1e^{\jmath\phi_1}\kappa_1\right|^2}{\left(p_2a_{12}+1\right)^2-\left|p_2e^{\jmath\phi_2}\kappa_2a_{12}\right|^2}\right] \; .\label{eq:R1}
\end{align}
Through \eqref{eq:R1} it is clear that $R_1$ is maximized when $\left|p_2e^{\jmath\phi_2}\kappa_2a_{12}+P_1e^{\jmath\phi_1}\kappa_1\right|^2$ is minimized, which yields
\begin{align}
	\kappa_1&=\min\left(\frac{p_2\kappa_2a_{12}}{P_1},1\right) \; ,\label{eq:kappa1}\\
	\phi_1&=\phi_2+\pi \; .\label{eq:phi1}
\end{align}
From \eqref{eq:kappa1} we observe that, if $\kappa_2=0$, i.e., user 2 transmits a proper signal, then user 1 must also transmit a proper signal by setting $\kappa_1=0$. Similarly, if user 2 transmits an improper signal ($\kappa_2>0$), then the signal transmitted by user 1 must also be improper. According to \eqref{eq:phi1}, the difference between the phases of the complementary variances of the desired and interference signals at receiver 1 is $\pi$. Such a phase difference can be interpreted by looking at the joint distribution of the real and imaginary parts of the desired signal and interference at receiver 1. The level contours of their distributions are ellipses whose major axes are rotated by $\pi/2$ with respect to each other \cite{Schreier2010}, so that the signal and interference power are concentrated along orthogonal dimensions.

Now we observe the following. With the optimal choice of $\phi_1$, given by \eqref{eq:phi1}, the effect of $\phi_2$ is compensated at receiver 1. Thus the achievable rate of user 1 is independent of the specific value of $\phi_2$. Furthermore, since user 2 is not affected by interference, $\phi_2$ also has no impact on its achievable rate. Hence, without loss of generality, we can take $\phi_2=0$. With these considerations, the design parameters are reduced to the transmit power $p_2$ and circularity coefficient $\kappa_2$ of user 2. After some manipulations of \eqref{eq:R2init} and \eqref{eq:R1}, the achievable rates of user 1 and user 2, as a function of the design parameters, are given by  
\begin{align}
	 R_1\left(p_2,\kappa_2\right)\hspace{-0.1cm}&=\hspace{-0.1cm}\left\{\begin{matrix}\frac{1}{2}\log_2\left[\frac{\left(p_2a_{12}+P_1+1\right)^2}{1+p_2a_{12}\left(p_2a_{12}(1-\kappa_2^2)+2\right)}\right]  &\hspace{-0.4cm} \text{if} \; \kappa_1<1\\
	 \frac{1}{2}\log_2\left[1+\frac{2P_1\left(p_2a_{12}(1+\kappa_2)+1\right)}{1+p_2a_{12}\left(p_2a_{12}(1-\kappa_2^2)+2\right)}\right]  & \hspace{-0.4cm} \text{if} \; \kappa_1=1\end{matrix}\right. \hspace{-0.1cm}, \label{eq:R1pk}\\
	R_2\left(p_2,\kappa_2\right)&=\frac{1}{2}\log_2\left[1+p_2\left(p_2(1-\kappa_2^2)+2\right)\right] \; ,\label{eq:R2pk}
\end{align}
and the achievable rate region defined in \eqref{eq:RateRegion} can then be expressed as
\begin{equation}\label{eq:RateRegionpk}
	\mathcal{R}=\underset{\substack{0\leq p_2\leq P_2 \\ 0\leq\kappa_2\leq1}}{\bigcup}\left(R_1\left(p_2,\kappa_2\right),R_2\left(p_2,\kappa_2\right)\right) \; .
\end{equation}
In order to characterize the boundary of the region defined in \eqref{eq:RateRegionpk}, we notice that the achievable rate of user 1 is bounded as
\begin{equation}\label{eq:R1lbub}
	0\leq R_1\left(p_2,\kappa_2\right)\leq\log_2\left(1+P_1\right) \; .
\end{equation}
For each achievable rate of user 1, the corresponding Pareto optimal point is given by the one maximizing the rate of user 2, $R_2(p_2,\kappa_2)$, which can be cast as the following optimization problem
\begin{align}
\mathcal{P}:\hspace{0.5cm} & \underset{p_2,\kappa_2}{\text{maximize}}
& & R_2\left(p_2,\kappa_2\right) \; ,\notag\\
& \text{subject to}
& & 0\leq p_2 \leq P_2 \; , \notag\\
& & & 0\leq\kappa_2\leq1 \; , \notag\\
& & & R_1\left(p_2,\kappa_2\right)\geq\alpha\log_2\left(1+P_1\right) \; ,\label{eq:RateConst}
\end{align}
for a given $\alpha\in[0,1]$. Thus, we can compute every point of the rate region boundary by varying $\alpha$ between 0 and 1 and solving problem $\mathcal{P}$.

The set of constraints of problem $\mathcal{P}$, which defines the feasibility set of our design parameters, consists of two constraints affecting the design parameters independently, namely, the power budget constraint and the bounds on the circularity coefficient, and an additional one that jointly constrains $p_2$ and $\kappa_2$. The latter expresses a rate constraint on user 1, so that a specific point of the region boundary, determined by $\alpha$, is computed. For a given $\kappa_2$, this constraint essentially limits the transmit power of user 2, $p_2$. Consequently, we can rewrite it in a more convenient form, as we express in the following lemma.
\begin{lemma}\label{th:lemma0}
	Let $\gamma_x=2^x-1$ and $\bar{R}=\alpha\log_2(1+P_1)$. The rate constraint $R_1\left(p_2,\kappa_2\right)\geq\bar{R}$ is then equivalent to the power constraint $p_2\leq q\left(\kappa_2\right)$, where $q(\kappa_2)$ is given by
\begin{equation}\label{eq:pk}
	 q(\kappa_2)\hspace{-0.1cm}=\hspace{-0.1cm}\left\{\begin{matrix}\frac{\left(P_1-\gamma_{2\bar{R}}\right)+\sqrt{\left(\gamma_{2\bar{R}}+1\right)\left[P_1^2\left(1-\kappa_2^2\right)+\left(\gamma_{2\bar{R}}-2P_1\right)\kappa_2^2\right]}}{a_{12}\left[\left(\gamma_{2\bar{R}}+1\right)\left(1-\kappa_2^2\right)-1\right]}  & \hspace{-0.3cm}\text{if} \; \kappa_1<1 \\
	 \frac{1}{a_{12}\left(1-\kappa_2\right)}\left(\frac{2P_1}{\gamma_{2\bar{R}}}-1\right) & \hspace{-0.3cm} \text{if} \; \kappa_1=1\end{matrix}\right. \hspace{-0.1cm}.
\end{equation}
\end{lemma}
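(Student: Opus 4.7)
The plan is to split on the two branches of $R_1(p_2,\kappa_2)$ in (\ref{eq:R1pk}), which correspond respectively to $\kappa_1 < 1$ (i.e., $p_2 \kappa_2 a_{12} < P_1$) and $\kappa_1 = 1$ (i.e., $p_2 \kappa_2 a_{12} \geq P_1$). In each branch I will exponentiate the inequality $R_1 \geq \bar{R}$ to reduce it to a polynomial inequality in $p_2$, which can be solved in closed form to yield the threshold $q(\kappa_2)$.

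In the first branch, exponentiating produces $(p_2 a_{12} + P_1 + 1)^2 \geq (\gamma_{2\bar{R}}+1)[1 + p_2 a_{12}(p_2 a_{12}(1-\kappa_2^2) + 2)]$. Setting $x = p_2 a_{12}$, I would collect terms to obtain the quadratic
\[ A x^2 - 2(P_1 - \gamma_{2\bar{R}}) x - [(P_1+1)^2 - (\gamma_{2\bar{R}}+1)] \leq 0, \]
with $A = (\gamma_{2\bar{R}}+1)(1-\kappa_2^2) - 1$. After some algebra, the discriminant simplifies to $(\gamma_{2\bar{R}}+1)[P_1^2(1-\kappa_2^2) + \kappa_2^2(\gamma_{2\bar{R}} - 2P_1)]$, precisely the quantity under the radical in the first branch of (\ref{eq:pk}). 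The quadratic formula then yields the root picked up by the ``$+$'' sign as $q(\kappa_2)$.

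In the second branch, I would first note that the denominator $1 + p_2 a_{12}(p_2 a_{12}(1-\kappa_2^2)+2)$ factors as $(1 + p_2 a_{12}(1+\kappa_2))(1 + p_2 a_{12}(1-\kappa_2))$, and one factor cancels with $p_2 a_{12}(1+\kappa_2)+1$ in the numerator of the bracket. The formula collapses to $R_1 = \tfrac{1}{2}\log_2\bigl[1 + 2P_1/(1 + p_2 a_{12}(1-\kappa_2))\bigr]$, so $R_1 \geq \bar{R}$ becomes linear in $p_2$ and the second branch of (\ref{eq:pk}) follows directly.

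The main obstacle will be the sign analysis in the first branch: $A$ can be either positive or negative depending on $\kappa_2$ and $\bar{R}$, which flips the orientation of the parabola and swaps the role of the two roots. I plan to resolve this by observing that $p_2 = 0$ always satisfies the rate constraint (since $R_1(0,\kappa_2) = \log_2(1+P_1) \geq \bar{R}$ for $\alpha \in [0,1]$) and that $R_1$ is monotonically decreasing in $p_2$ for fixed $\kappa_2$; hence the feasible set in $p_2$ is an interval of the form $[0, q(\kappa_2)]$ whose right endpoint is the root captured by the ``$+$'' sign, regardless of the sign of $A$. Consistency of the two branches at the interface $p_2 \kappa_2 a_{12} = P_1$ can serve as a final sanity check.
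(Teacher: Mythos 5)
Your proposal is correct and follows essentially the same route as the paper: exponentiate each branch of $R_1$, reduce to a quadratic in $p_2a_{12}$, and select the root via the monotonicity of $R_1$ in $p_2$ together with feasibility at $p_2=0$ (the paper uses exactly this monotonicity argument to discard the spurious root, enumerating the convex/concave sub-cases explicitly). Your treatment of the $\kappa_1=1$ branch is actually cleaner than the paper's: the factorization $1+p_2a_{12}(p_2a_{12}(1-\kappa_2^2)+2)=(1+p_2a_{12}(1+\kappa_2))(1+p_2a_{12}(1-\kappa_2))$ collapses the constraint to a linear inequality, whereas the paper solves a quadratic and must resolve an absolute value using $2P_1\geq\gamma_{2\bar R}$. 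The only point you gloss over is the situation where the first-branch quadratic is concave with complex (or non-positive) roots, so that no ``$+$'' root exists satisfying $0\leq\kappa_1<1$; there the binding point necessarily falls in the $\kappa_1=1$ regime and the second-branch formula takes over, which your case split handles implicitly but the paper spells out.
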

\begin{proof}
	Please refer to Appendix \ref{app:lemma0}.
\end{proof}
As a result of Lemma \ref{th:lemma0}, we can equivalently state problem $\mathcal{P}$ as
\begin{align*}
\mathcal{P}:\hspace{0.5cm} & \underset{p_2,\kappa_2}{\text{maximize}}
& & R_2\left(p_2,\kappa_2\right) \; ,\\
& \text{subject to}
& & 0\leq p_2 \leq\min\left[q(\kappa_2),P_2\right] \; , \\
& & & 0\leq\kappa_2\leq1 \; .
\end{align*}

\begin{figure}[t!]
\centering
\includegraphics[width=1\columnwidth]{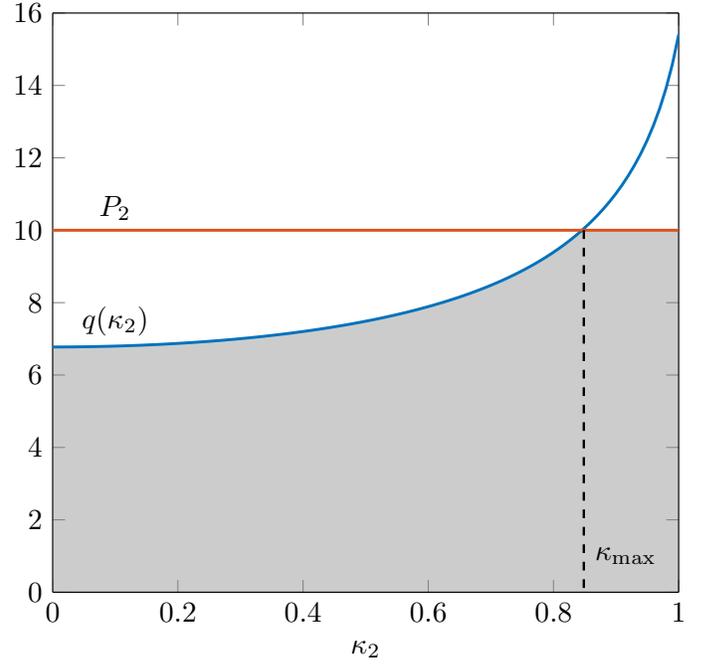}
\caption{Example of the transmit power and power constraints of user 2, for $P_1=20$, $P_2=10$, $a_{12}=0.5$ and $\alpha=0.7$. The shaded area is the set of possible transmit powers, and $\kappa_{\max}$ is the value of $\kappa_2$ such that $q(\kappa_{\max})=P_2$.}
\label{fig:pk}
\end{figure}
For the sake of illustration, we plot in Fig. \ref{fig:pk} an example of the two constraints affecting the transmit power of user 2, namely, $q(\kappa_2)$ and $P_2$. Obviously, $q(\kappa_2)$ is increasing in $\kappa_2$, since an interference with a higher degree of impropriety is less harmful, which is why user 1 tolerates a higher amount of interference power without reducing its achievable rate. To achieve the global maximum of problem $\mathcal{P}$, it is clear that we must set $p_2(\kappa_2)=\min[q(\kappa_2),P_2]$, where we have explicitly expressed its dependence on $\kappa_2$. Consequently, the number of design parameters is reduced to one, $\kappa_2$, and $\mathcal{P}$ is further simplified to
\begin{align*}
\mathcal{P}:\hspace{0.5cm} & \underset{\kappa_2}{\text{maximize}}
& & R_2\left(\kappa_2\right) \; ,\\
& \text{subject to}
& &  0\leq\kappa_2\leq1 \; ,
\end{align*}
where $R_2(\kappa_2)$ can now be expressed as
\begin{equation}\label{eq:R2}
	R_2\left(\kappa_2\right)=\frac{1}{2}\log_2\left\{1+p_2\left(\kappa_2\right)\left[p_2\left(\kappa_2\right)\left(1-\kappa_2^2\right)+2\right]\right\} \; .
\end{equation}
Notice that, by expressing $p_2$ as a function of $\kappa_2$, $R_2(\kappa_2)$ is now also a function of $\kappa_2$ only. That is, the key task now is to determine the optimal circularity coefficient of the second user, or, in other words, the degree of impropriety of its transmit signal such that its achievable rate, given by \eqref{eq:R2}, is maximized.

In the forthcoming lines we will analyze when $R_2(\kappa_2)$ is maximized by an improper signal, i.e., $\kappa_2>0$, and then we will derive the optimal value of $\kappa_2$ in those cases. That is, we want to determine the conditions that must be fulfilled for improper signaling to outperform conventional proper signaling. Since there are two different power constraints affecting $p_2$, namely, the power budget of user 2 and the interference power created at user 1, we will start by dropping the power budget constraint to analyze how the interference constraint, $q(\kappa_2)$, affects the rate of user 2 as a function of $\kappa_2$. We first present the following lemma.
\begin{lemma}\label{th:lemma1}
	Let $p_2(\kappa_2)=q(\kappa_2)$ and assume that there exists $\hat{\kappa}_2$ such that $\left.\frac{\partial R_2(\kappa_2)}{\partial\kappa_2^2}\right|_{\kappa_2=\hat{\kappa}_2}\geq0$. Then $\frac{\partial R_2(\kappa_2)}{\partial\kappa_2^2}>0$ for all $\kappa_2>\hat{\kappa}_2$.
\end{lemma}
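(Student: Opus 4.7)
The plan is to change variables to $u = \kappa_2^2$ and analyze $R_2$ as a function of $u$. Since the logarithm is increasing and its argument is strictly positive, the sign of $\partial R_2/\partial u$ equals the sign of $f'(u)$, where $f(u) = 1 + q(u)\bigl[q(u)(1-u)+2\bigr]$ is the pre-log expression. The lemma is thus equivalent to the implication: whenever $f'(\hat u) \ge 0$, one has $f'(u) > 0$ for all $u > \hat u$ in the feasibility interval.

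My first attempt would be to establish this by proving that $f$ is strictly convex in $u$, i.e., $f''(u) > 0$, on the feasibility interval. Strict convexity delivers strict monotonicity of $f'$, so that $f'(u) > f'(\hat u) \ge 0$ whenever $u > \hat u$, which gives the lemma. The appeal of this route is that it is a single global statement, independent of the choice of $\hat u$.

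To compute $f''$ I would treat the two branches of $q(u)$ in Lemma \ref{th:lemma0} separately. In the branch $\kappa_1 = 1$, $q$ is a rational function of $u$ and $f''$ can be computed directly. In the main branch $\kappa_1 < 1$, $q(u)$ contains a square root whose argument is \emph{affine} in $u$, so I would introduce $v = \sqrt{(\gamma_{2\bar R}+1)\bigl[P_1^2-((P_1+1)^2-\gamma_{2\bar R}-1)u\bigr]}$ to eliminate the root. The map $u \mapsto v$ is smooth and monotone on the feasibility interval, which lets us transfer the derivatives through the chain rule. After substitution I expect $f''$ to collapse to an explicit expression whose sign can be read off using the physical constraints $P_1 > 0$, $a_{12} > 0$, $0 \le u \le 1$, $q(u) > 0$, and $P_1 \ge \gamma_{2\bar R}$ (the last from $\bar R \le \log_2(1+P_1)$).

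The main obstacle is the algebraic bookkeeping in the $\kappa_1 < 1$ case, compounded by the singularity of $q(u)$ at $u = \gamma_{2\bar R}/(\gamma_{2\bar R}+1)$ where the denominator of $q$ vanishes, and by the need to check that the branch transition to $\kappa_1 = 1$ is handled consistently at the junction between the two pieces of $q$. A safer fallback, should strict global convexity of $f$ fail at some boundary point, is the weaker statement that $f'(u)$ changes sign at most once, from negative to positive. This can be secured by showing that at every interior critical point $u_0$ of $f$ one has $f''(u_0) > 0$, which is a local rather than a global convexity statement and is typically easier to verify. Either way, the core of the argument is the reduction to a monotonicity/convexity property of a function whose only nontrivial dependence comes from $q(u)$.
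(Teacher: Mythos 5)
There is a genuine gap here: your primary route---global strict convexity of the pre-log $f(u)=1+q(u)\bigl[q(u)(1-u)+2\bigr]$ in $u=\kappa_2^2$---is false in general. On the $\kappa_1=1$ branch, $q=C/(1-\kappa_2)$ with $C=\frac{1}{a_{12}}\bigl(\frac{2P_1}{\gamma_{2\bar R}}-1\bigr)$, so writing $t=\sqrt{u}$ one gets $f=1+\frac{C^2(1+t)+2C}{1-t}$, hence $f'(u)=\frac{C^2+C}{t(1-t)^2}>0$ but $f''(u)=(C^2+C)\frac{3t-1}{2t^3(1-t)^3}$, which is \emph{negative} for $\kappa_2<1/3$ (and this branch is reachable at small $\kappa_2$ when $\bar R$ is small). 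So $f'$ is not monotone and the convexity argument collapses; you would be forced onto your fallback. The fallback (positivity of $f''$ at every critical point of $f$, giving a single $-$ to $+$ sign change) is structurally sound, but it is only announced, not executed: it still requires the heavy second-derivative computation on the $\kappa_1<1$ branch, plus continuity/consistency of $f'$ across the branch junction, none of which is carried out. Separately, the auxiliary inequality you propose to use, $P_1\geq\gamma_{2\bar R}$, is wrong: $\bar R\leq\log_2(1+P_1)$ only gives $\gamma_{\bar R}\leq P_1$, while $\gamma_{2\bar R}=(1+P_1)^{2\alpha}-1$ can be as large as $P_1^2+2P_1$; since the sign of $P_1-\gamma_{2\bar R}$ enters $q$ explicitly, relying on it would corrupt the sign bookkeeping.

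The paper avoids second derivatives altogether. It differentiates the defining quadratic of $q$ implicitly to get $\partial q/\partial\kappa_2^2$, substitutes into the first-order condition, and after cancellation obtains the equivalence $\frac{\partial R_2}{\partial\kappa_2^2}\geq0\Leftrightarrow a_{12}\geq\frac{\gamma_{2\bar R}-P_1-\bar q(\kappa_2)}{\gamma_{2\bar R}+1}$, where $\bar q(\kappa_2)=a_{12}q(\kappa_2)$ is increasing in $\kappa_2$ and independent of $a_{12}$. The right-hand side is therefore decreasing in $\kappa_2$, so once the inequality holds at $\hat\kappa_2$ it holds strictly for all larger $\kappa_2$; the $\kappa_1=1$ branch is handled separately by showing the derivative is always non-negative there. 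If you want to salvage your approach, aim for this kind of ``threshold monotone in $\kappa_2$'' reformulation of the first-order condition rather than convexity of $f$.
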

\begin{proof}
	Please refer to Appendix \ref{app:lemma1}.	
\end{proof}
Lemma \ref{th:lemma1} leads to the following key result.
\begin{lemma}\label{th:lemma2}
	Let $p_2(\kappa_2)=q(\kappa_2)$ and
	\begin{align}
	\mu(\alpha)&=1-\frac{P_1}{\gamma_{2\bar{R}}-\gamma_{\bar{R}}} \; ,\label{eq:mu1}\\
	\iota(\alpha)&=\left\{\begin{matrix}\frac{\left(P_1-\gamma_{\bar{R}}\right)^2}{\left(\gamma_{2\bar{R}}-\gamma_{\bar{R}}\right)\left(\sqrt{\gamma_{2\bar{R}}-2P_1}-\gamma_{\bar{R}}\right)^2} & \text{if} \; 2P_1<\gamma_{2\bar{R}} \\
0 & \text{otherwise}\end{matrix}\right. \; .\label{eq:mu2}
\end{align}
Then, the dependency of $R_2(\kappa_2)$ on $\kappa_2$ can be described as follows.
\begin{itemize}
	\item If $a_{12}\geq\mu(\alpha)$, then $R_2(\kappa_2)$  increases monotonically in $\kappa_2$.
	\item If $\mu(\alpha)>a_{12}>\iota(\alpha)$, then there exists $\tilde{\kappa}_2>0$ such that $R_2(\kappa_2)\leq R_2(0)$ for $\kappa_2\leq\tilde{\kappa}_2$ and $R_2(\kappa_2)>R_2(0)$ otherwise. Furthermore, $R_2(\kappa_2)$  increases monotonically for $\kappa_2\geq\tilde{\kappa}_2$.
	\item If $\iota(\alpha)\geq a_{12}$, then $R_2(\kappa_2)\leq R_2(0)$ for all values of $\kappa_2$.
\end{itemize}
\end{lemma}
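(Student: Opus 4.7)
The plan is to use Lemma \ref{th:lemma1} as the structural backbone and reduce the three-case statement to inspecting only two simple quantities: the sign of $\partial R_2/\partial \kappa_2^2$ at $\kappa_2 = 0$, and the behavior of $R_2$ at the right endpoint of the feasibility interval relative to $R_2(0)$. Indeed, Lemma \ref{th:lemma1} forces $\partial R_2/\partial \kappa_2^2$ to change sign at most once on $[0,1]$ (from non-positive to strictly positive), so $R_2(\kappa_2)$ is either monotonically increasing throughout, or strictly decreases on some initial portion $[0,\hat\kappa_2)$ and then strictly increases on $[\hat\kappa_2, 1]$. This dichotomy makes an endpoint analysis both necessary and sufficient for identifying the three cases.

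The first step is to compute $\partial R_2/\partial \kappa_2^2$ at $\kappa_2 = 0$. There we are in the first branch of \eqref{eq:pk}, and using the identities $\sqrt{\gamma_{2\bar R}+1} = \gamma_{\bar R}+1$ and $\gamma_{2\bar R} = \gamma_{\bar R}(\gamma_{\bar R}+2)$ one obtains the clean form $p_2(0) = q(0) = (P_1 - \gamma_{\bar R})/(a_{12}\gamma_{\bar R})$. Differentiating $R_2(\kappa_2)$ in \eqref{eq:R2} with respect to $\kappa_2^2$ via the chain rule, substituting the derivative of $q(\kappa_2)$ at $\kappa_2 = 0$, and simplifying should reduce the non-negativity condition to precisely $a_{12} \geq \mu(\alpha) = 1 - P_1/(\gamma_{2\bar R}-\gamma_{\bar R})$. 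Invoking Lemma \ref{th:lemma1}, $R_2$ is then monotonically increasing throughout, which establishes case (i).

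The second step treats the regime $a_{12} < \mu(\alpha)$, where $R_2$ initially dips below $R_2(0)$; the question is whether it ever climbs back. I would evaluate $R_2(\kappa_2)$ as $\kappa_2 \to 1$. When $2P_1 \geq \gamma_{2\bar R}$, the second branch of \eqref{eq:pk} is operative near $\kappa_2 = 1$ and yields $q(\kappa_2)(1-\kappa_2^2) \to 2(2P_1-\gamma_{2\bar R})/(a_{12}\gamma_{2\bar R})$, a finite non-negative constant, while $q(\kappa_2) \to \infty$. Hence $R_2(\kappa_2) \to \infty$, so $R_2$ eventually surpasses $R_2(0)$ and by continuity plus the monotonicity from Lemma \ref{th:lemma1} the crossing $\tilde\kappa_2$ is unique; this is case (ii) with $\iota(\alpha) = 0$. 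When $2P_1 < \gamma_{2\bar R}$, the first branch is still operative at the endpoint; substituting $\kappa_2 = 1$ gives $R_2(1)$ explicitly in terms of $P_1$, $\gamma_{\bar R}$, $\gamma_{2\bar R}$ and $a_{12}$, and comparing with $R_2(0) = \frac{1}{2}\log_2[1+q(0)(q(0)+2)]$ reduces, after algebra, to the clean threshold $a_{12} > \iota(\alpha)$. This yields case (ii) if the inequality holds (unique crossing by the intermediate value theorem and Lemma \ref{th:lemma1}) and case (iii) otherwise ($R_2$ never climbs back above $R_2(0)$).

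The main obstacle will be the algebraic condensation required in both steps: extracting the closed forms $\mu(\alpha)$ and $\iota(\alpha)$ from the raw derivatives and endpoint evaluations of \eqref{eq:R2} and \eqref{eq:pk} relies on carefully exploiting the identities $\gamma_{2\bar R} = \gamma_{\bar R}(\gamma_{\bar R}+2)$ and $\sqrt{\gamma_{2\bar R}+1} = \gamma_{\bar R}+1$, as well as keeping track of which branch of $q(\kappa_2)$ is active at each endpoint and of the sign of its denominator in the sub-regime $2P_1 < \gamma_{2\bar R}$.
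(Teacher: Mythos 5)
Your proposal is correct and follows essentially the same route as the paper's proof: both use Lemma \ref{th:lemma1} to establish that $R_2$ can change from decreasing to increasing at most once, obtain $\mu(\alpha)$ from the sign of $\partial R_2/\partial\kappa_2^2$ at $\kappa_2=0$ (with the same simplification $\bar{q}(0)=P_1/\gamma_{\bar R}-1$), and obtain $\iota(\alpha)$ by comparing $R_2(1)$ with $R_2(0)$, splitting into the sub-cases $2P_1\geq\gamma_{2\bar R}$ (where $q(1)=\infty$) and $2P_1<\gamma_{2\bar R}$. The remaining work is only the algebraic condensation you already flag, which the paper carries out exactly as you describe.
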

\begin{proof}
	Please refer to Appendix \ref{app:lemma2}.
\end{proof}

\begin{figure}[t!]
\centering
\includegraphics[width=1\columnwidth]{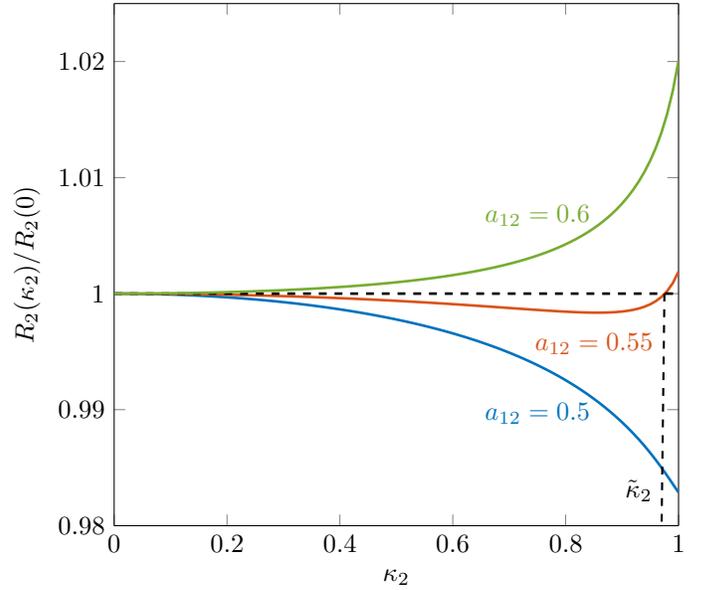}
\caption{Illustration of Lemma \ref{th:lemma2}. We set $\alpha=0.7$ and $P_1=10$, which yields $\mu(0.7)=0.571$ and $\iota(0.7)=0.545$.}
\label{fig:R2vsKappa}
\end{figure}
To illustrate the characterization provided in Lemma \ref{th:lemma2}, we plot in Fig. \ref{fig:R2vsKappa} the rate of user 2 normalized by the proper signaling rate with examples for each of the three cases described in Lemma \ref{th:lemma2}. We use $\alpha=0.7$ and $P_1=10$, which gives $\mu(0.7)=0.571$ and $\iota(0.7)=0.545$. Notice that the three curves vary only slightly with $\kappa_2$ because the three chosen values of $a_{12}$ 
(0.5, 0.55 and 0.6) are all close to the thresholds. Lemma \ref{th:lemma2} describes how the interference constraint $q(\kappa_2)$ shapes the dependency of the rate of user 2 on the degree of impropriety of its transmitted signal. Thus, if we drop the power budget constraint or, alternatively, if the power budget is sufficiently high, improper signaling outperforms proper signaling for all values of $\kappa_2$ if $a_{12}\geq\mu(\alpha)$, but only for some values of $\kappa_2$ if $\mu(\alpha)>a_{12}>\iota(\alpha)$, and it is always suboptimal if $a_{12}\leq\iota(\alpha)$. It can be clearly seen now what the impact of the power budget constraint is. For the cases $a_{12}\geq\mu(\alpha)$ and $\mu(\alpha)>a_{12}>\iota(\alpha)$, $\kappa_2=1$ maximizes $R_2(\kappa_2)$. However, increasing $\kappa_2$ is only meaningful if it also permits increasing the transmit power. When the power budget constraint is considered, there may exist $\kappa_{\max}$ such that $q(\kappa_{\max})=P_2$, hence increasing $\kappa_2$ beyond that point does not permit a further increase in the transmit power (see Fig. \ref{fig:pk}). Therefore, when $\mu(\alpha)>a_{12}>\iota(\alpha)$ improper signaling will be optimal if $\kappa_{\max}>\tilde{\kappa}_2$, where $\tilde{\kappa}_2>0$ is such that $R_2(\tilde{\kappa}_2)=R_2(0)$ (see Fig. \ref{fig:R2vsKappa}). On the other hand, improper signaling will be optimal for $a_{12}\geq\mu(\alpha)$ only if $q(0)<P_2$. This condition means that proper signaling does not permit transmitting at maximum power, thus there is still power left over, which can be exploited by improper signaling to improve the achievable rate.

With all these ingredients, we can derive a complete characterization of the optimality of improper signaling for this scenario and, consequently, of the Pareto-optimal region. Our main result is presented in the following theorem.
\begin{theorem}\label{th:theorem1}
	Let us define $\iota(\alpha)$ as in \eqref{eq:mu2} and
	\begin{equation}\label{eq:rho}
		\rho(\alpha)=\max\left[\frac{1}{P_2}\left(\frac{P_1}{\gamma_{\bar{R}}}-1\right),\nu(\alpha)\right] \; ,
	\end{equation}
	where
	\begin{equation}
		\nu(\alpha)=\left\{\begin{matrix}\eta(\alpha) & \text{if} \; \eta(\alpha)\geq\frac{P_1}{P_2}+a_I \\
		\frac{\gamma_{\bar{R}}\left(2\gamma_{2\bar{R}}+1\right)-P_1\left(2\gamma_{\bar{R}}+1\right)}{\gamma_{\bar{R}}\left[P_2+2\left(\gamma_{2\bar{R}}+1\right)\right]} & \text{otherwise}\; ,\end{matrix}\right.
	\end{equation}
	and
	\begin{align}
		\eta(\alpha)&=\frac{a_P-P_2a_I+\sqrt{(a_P-P_2a_I)^2+2P_2(a_I^2+a_P^2)}}{2} \; ,\\
		a_P&=\frac{1}{P_2}\left(\frac{P_1}{\gamma_{\bar{R}}}-1\right) \; ,\label{eq:aP}\\
		a_I&=\frac{1}{P_2}\left(\frac{2P_1}{\gamma_{2\bar{R}}}-1\right) \; .\label{eq:aI}
	\end{align}
	Then, improper signaling is required to maximize $R_2(\kappa_2)$ if and only if
	\begin{equation}
		a_{12}>\max\left[\iota(\alpha),\rho(\alpha)\right] \; .
	\end{equation}
	Furthermore, if this expression holds, the optimal circularity coefficient is
	\begin{equation}\label{eq:kOpt}
			\kappa_2=\left\{\begin{matrix} 1 & \text{if} \; q(1)\leq P_2 \\
		 \kappa_{\max} & \text{otherwise} \; ,\end{matrix} \right.
	\end{equation}
	where $\kappa_{\max}$ is the minimum value of $\kappa_2$ such that $P_2\leq q(\kappa_{\max})$.
\end{theorem}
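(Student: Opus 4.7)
The plan is to patch Lemma~\ref{th:lemma2} (which governs the interference-limited branch $p_2=q(\kappa_2)$) with a direct analysis of the power-limited branch $p_2=P_2$, splitting $[0,1]$ at the unique $\kappa_{\max}$ defined by $q(\kappa_{\max})=P_2$ (taking $\kappa_{\max}=0$ if $q(0)\geq P_2$ and $\kappa_{\max}=1$ if $q(1)\leq P_2$). Since $q$ is monotonically increasing, $p_2(\kappa_2)=\min[q(\kappa_2),P_2]$ equals $q(\kappa_2)$ on $[0,\kappa_{\max}]$ and equals $P_2$ on $[\kappa_{\max},1]$. Substituting $p_2=P_2$ into \eqref{eq:R2} yields $R_2(\kappa_2)=\frac{1}{2}\log_2[(1+P_2)^2-P_2^2\kappa_2^2]$, which is strictly decreasing in $\kappa_2$; hence the maximum over $[\kappa_{\max},1]$ is attained at $\kappa_{\max}$, so the global maximizer of $\mathcal{P}$ lies in $\{0,\kappa_{\max}\}$.

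Therefore, improper signaling strictly outperforms proper signaling if and only if $\kappa_{\max}>0$ and $R_2(\kappa_{\max})>R_2(0)$. The first condition, $q(0)<P_2$, simplifies (using $\gamma_{2\bar R}=\gamma_{\bar R}(\gamma_{\bar R}+2)$, so that $q(0)=(P_1/\gamma_{\bar R}-1)/a_{12}$) to $a_{12}>(P_1/\gamma_{\bar R}-1)/P_2$, which is precisely the first term inside $\rho(\alpha)$ in \eqref{eq:rho}. The second condition is handled by invoking Lemma~\ref{th:lemma2} at $\kappa_2=\kappa_{\max}$: it holds automatically in Case~A ($a_{12}\geq\mu(\alpha)$), fails in Case~C ($a_{12}\leq\iota(\alpha)$), and in Case~B ($\iota(\alpha)<a_{12}<\mu(\alpha)$) is equivalent to $\kappa_{\max}>\tilde{\kappa}_2$, where $\tilde{\kappa}_2$ is the crossing defined by $R_2(\tilde{\kappa}_2)=R_2(0)$. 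Translating $\kappa_{\max}>\tilde{\kappa}_2$ into a condition on $a_{12}$ (by using $q(\kappa_{\max})=P_2$ to substitute out $\kappa_{\max}$ and then expanding $R_2(\kappa_{\max})=R_2(0)$) produces $\nu(\alpha)$; its two branches reflect whether the implicit $\kappa_{\max}$ falls in the $\kappa_1<1$ or $\kappa_1=1$ branch of \eqref{eq:pk}, with the transition captured by the test $\eta(\alpha)\gtrless P_1/P_2+a_I$. Combining both conditions gives $a_{12}>\max[\iota(\alpha),\rho(\alpha)]$.

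The closed form \eqref{eq:kOpt} then follows directly from the regime split: if $q(1)\leq P_2$ then $\kappa_{\max}=1$ and the optimum is $\kappa_2=1$, whereas otherwise the optimum is the interior point $\kappa_{\max}$ obtained by solving $q(\kappa_{\max})=P_2$. The main technical obstacle is the explicit algebraic derivation of $\nu(\alpha)$: one must solve the (quadratic) equation $q(\kappa_{\max})=P_2$ in both branches of \eqref{eq:pk}, substitute into $R_2(\kappa_{\max})=R_2(0)$, and isolate $a_{12}$, tracking which branch of $q$ is active and the signs of the discriminants that arise along the way.
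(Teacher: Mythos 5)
Your proposal is correct and follows essentially the same route as the paper's proof in Appendix D: reduce the candidate maximizers to $\{0,\kappa_{\max}\}$, obtain the power-budget threshold from $q(0)<P_2$, and derive $\nu(\alpha)$ by solving $q(\kappa_{\max})=P_2$ in the two branches of \eqref{eq:pk} (distinguished by whether $\kappa_1=1$ holds at the threshold, i.e., $\eta(\alpha)\gtrless P_1/P_2+a_I$) and substituting into $R_2(\kappa_{\max})\geq R_2(0)$. Your explicit observation that $R_2$ with $p_2=P_2$ fixed is strictly decreasing in $\kappa_2$, which pins the optimum to $\{0,\kappa_{\max}\}$, is a step the paper leaves implicit ("it is clear that\ldots"), but it is the same argument; the remaining algebra you defer is exactly the computation carried out in the paper.
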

\begin{proof}
	Please refer to Appendix \ref{app:theorem1}.
\end{proof}
Before concluding this section, we introduce the following definition, which will be useful to describe the properties of the Pareto boundary.
\begin{definition}
	We call \emph{power-limited region} all the points of the rate region boundary for which $\iota(\alpha)=0$ or, alternatively, $2P_1\geq\gamma_{2\bar{R}}$. We call the remaining points of the rate region the \emph{interference-limited region}, i.e., those for which $\iota(\alpha)>0$ or, alternatively, $2P_1<\gamma_{2\bar{R}}$.
\end{definition}

\section{Discussion and numerical examples}\label{sec:dis}
This section provides a discussion on the derived characterization along with some numerical examples illustrating the most remarkable features of improper signaling in the Z-IC. Afterwards, the connection to related works in the literature is presented.

\subsection{Optimal strategies}

\subsubsection{Optimality of proper signaling}\label{sec:optProp}
As pointed out at the beginning of Section \ref{sec:main}, if proper signaling is the optimal strategy for one of the users, then it is also optimal for the other one. This means that any point of the region boundary is achieved by both users employing the same signaling scheme, i.e., either proper or improper, but not a combination of both.

\subsubsection{Maximally improper signaling for both users is optimal at most at one boundary point}
It can be noticed that there is at most one boundary point where both users simultaneously transmit a maximally improper signal, i.e., $\kappa_1=\kappa_2=1$ happens for no more than one Pareto-optimal point. This is due to the fact that, if the rate constraint \eqref{eq:RateConst} can be fulfilled for $\kappa_1=1$ (which corresponds to the case $2P_1\geq\gamma_{2\bar{R}}$), then user 1 tolerates an infinite amount of maximally improper interference along the orthogonal direction (see second equation in \eqref{eq:pk}). However, user 2 may only increase its rate by increasing $\kappa_2$ if it is operating below its power budget, since the only purpose of increasing $\kappa_2$ is to increase $p_2$ as well. Because of this and because $q(\kappa_2)$ is a continuous function for $2P_1>\gamma_{2\bar{R}}$, setting $\kappa_1=\kappa_2=1$ is always suboptimal when $2P_1>\gamma_{2\bar{R}}$. However, this reasoning is not applicable when $2P_1=\gamma_{2\bar{R}}$, since $q(\kappa_2)$ turns into a discontinuous function for $\kappa_1=1$. This is because, in that case, $q(\kappa_2<1)=0$ and $q(\kappa_2=1)=\infty$. The intuition behind this behavior is that, when $2P_1=\gamma_{2\bar{R}}$, the first user achieves its corresponding rate with a maximally improper signal, i.e., $\kappa_1=1$, only if the interference is orthogonal to the signal subspace, i.e., only when the second user also transmits a maximally improper signal. Consequently, $\kappa_2=1$ must hold if $2P_1=\gamma_{2\bar{R}}$ and $\kappa_1=1$. Furthermore, since $p_2=P_2$ when $\kappa_2=1$, the condition $\kappa_1=1$ is equivalent, by \eqref{eq:kappa1}, to $P_2a_{12}\geq P_1$. That is, both users transmit maximally improper signals at the boundary point for which $R_1=\frac{1}{2}\log_2(1+2P_1)$ only if $P_2a_{12}\geq P_1$. 

To illustrate this property, we provide two simulation examples. As a first example we consider the channel coefficient $a_{12}=2$ and the power budgets $P_1=P_2=10$. Figure \ref{fig:CircCoeff} shows the optimal circularity coefficients and the transmit power of user 2. In this example $P_2a_{12}\geq P_1$ holds and because of that both users transmit maximally improper signals for $R_1=2.2\;\text{b/s/Hz}$. We can observe a discontinuity in the maximum transmit power of user 2, which will be explained later. Notice that, as $R_1$ approaches its maximum value of $3.46$ b/s/Hz, $\kappa_1$ goes towards 0 while $\kappa_2$ remains static at 1. Although this might seem to violate our statement in Section \ref{sec:optProp}, $\kappa_1$ is only equal to zero when $p_2$ is, which only happens at $R_1=3.46$ b/s/Hz. In this case, the circularity coefficient is no longer meaningful since the transmit power is zero. Therefore, all the boundary points satisfy our claim, which is also clear according to the optimal value of $\kappa_1$ in \eqref{eq:kappa1}. 
\begin{figure}[t!]
\centering
\includegraphics[width=1\columnwidth]{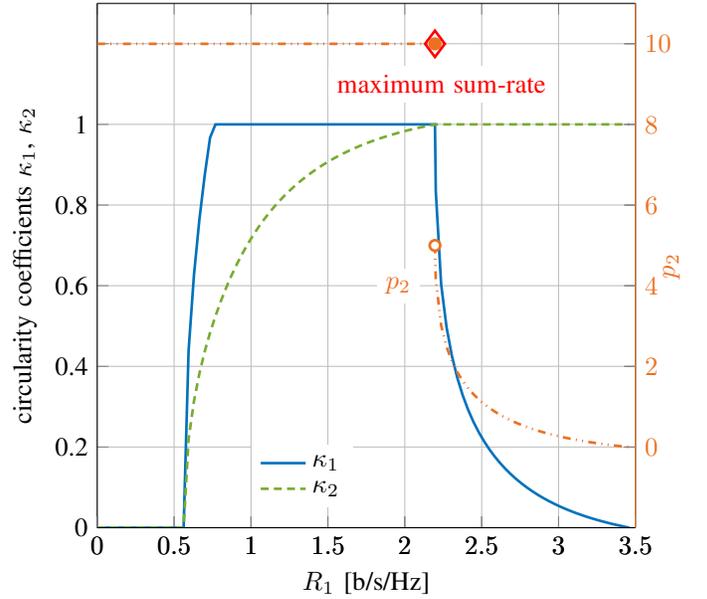}
\caption{Dependency of the optimal circularity coefficients and the transmit power $p_2$ on $R_1$ for $a_{12}=2$.}
\label{fig:CircCoeff}
\end{figure}

As a second example we consider $a_{12}=0.8$. In this case $P_2a_{12}<P_1$, hence it is expected that there are no boundary points that are achieved by both users transmitting maximally improper signals. This can be observed in Fig. \ref{fig:CircCoeff2}, which depicts the dependency of the circularity coefficients and transmit power on the rate, $R_1$. At the boundary point for $R_1=2.2\;\text{b/s/Hz}$, user 2 chooses the transmit signal as maximally improper, but the optimal circularity coefficient of user 1 equals 0.8. In Fig. \ref{fig:CircCoeff2} we can also observe a discontinuity in the maximum transmit power and in the circularity coefficients at approximately $R_1=2.9$ b/s/Hz. This discontinuity is different from the one observed in Fig. \ref{fig:CircCoeff} and can be explained as follows. At the point of the discontinuity we have $a_{12}=\max[\iota(\alpha),\rho(\alpha)]$. Furthermore, at this point $q(1)<P_2$. This means that maximally improper signaling achieves the same rate as proper signaling. Proper signaling starts outperforming improper signaling once $R_1$ increases beyond that point, in which case the optimal circularity coefficient jumps from $\kappa_2=1$ to $\kappa_2=0$, and the maximum transmit power jumps then from $q(1)$ to $q(0)$. 
\begin{figure}[t!]
\centering
\includegraphics[width=1\columnwidth]{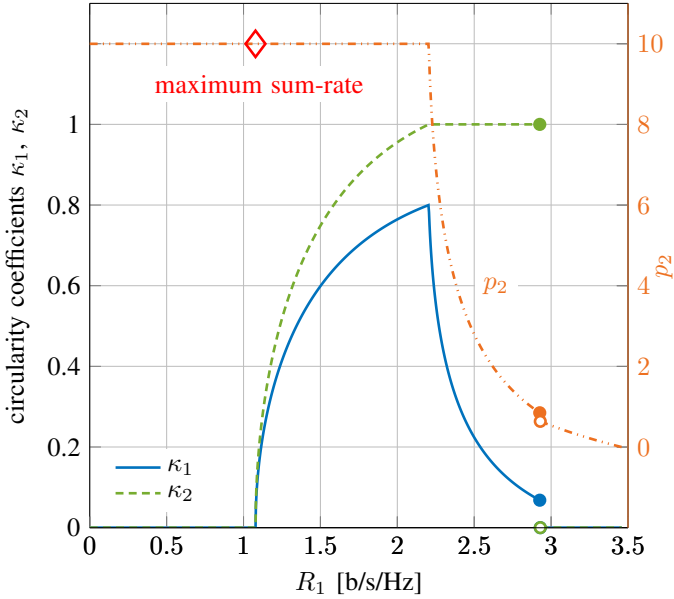}
\caption{Dependency of the optimal circularity coefficients and the transmit power $p_2$ on $R_1$ for $a_{12}=0.8$.}
\label{fig:CircCoeff2}
\end{figure}

\subsubsection{Relationship between $\kappa_1$ and $\kappa_2$}
Expression \eqref{eq:kappa1} also permits drawing insightful conclusions about the relationship between the circularity coefficients of both users. According to Theorem \ref{th:theorem1}, $0<\kappa_2<1$ implies $p_2=P_2$. Hence, $\kappa_1<\kappa_2$ holds if $\kappa_2>0$ and $P_2a_{12}<P_1$, i.e., when the signal-to-interference ratio (SIR) is greater than one. In such a case, it can be noticed that the signal transmitted by the first user is never chosen as maximally improper at any point of the Pareto boundary. This behavior can be clearly observed in Fig. \ref{fig:CircCoeff2}. On the other hand, when $P_2a_{12}\geq P_1$, or, alternatively, when the SIR is equal to or lower than one, $\kappa_1\geq\kappa_2$ holds whenever $0<\kappa_2<1$, as can be seen in Fig. \ref{fig:CircCoeff}. 

If the signal transmitted by user 1 is chosen as maximally improper for some points of the boundary, it will remain so as $\kappa_2\rightarrow1$, or, in other words, as $\gamma_{2\bar{R}}\rightarrow2P_1$. This is because $\kappa_2<1$ implies $p_2=P_2$, so that, by \eqref{eq:kappa1}, the first user will not decrease its circularity coefficient. However, once $\kappa_2$ equals 1, which corresponds to the point $2P_1=\gamma_{2\bar{R}}$, the degree of impropriety of the first user will then start decreasing with $R_1$, since these rates are not achievable for $\kappa_1=1$.

\subsection{Properties of the Pareto boundary}

\subsubsection{Behavior in the power-limited and interference-limited regions} 
The power-limited region, which refers to the power limitation of the second user, comprises all boundary points satisfying $2P_1\geq\gamma_{2\bar{R}}$, or, alternatively, $R_1\leq\frac{1}{2}\log_2(1+2P_1)$. In this region improper signaling is always optimal as long as the power budget $P_2$ is sufficiently high. In other words, the optimality of improper signaling is only determined by $\rho(\alpha)$, which depends on $P_2$ and goes towards 0 as $P_2$ increases. Furthermore, user 1 can achieve its required rate with a maximally improper signal, i.e., with $\kappa_1=1$. In this case, the interference along the unused dimension does not have any impact on its achievable rate, therefore $R_2\rightarrow\infty$ as $P_2\rightarrow\infty$. In the interference-limited region, user 1 must choose a circularity coefficient smaller than one to achieve the desired rate, and hence the tolerated interference is finite for all values of $\kappa_2$. As a result, the transmit power of the second user is eventually limited by $q(\kappa_2)$ as $P_2$ grows, which bounds the achievable rate. In other words, the optimality of improper signaling is eventually determined by $\iota(\alpha)$, which is independent of the power budget $P_2$.

\subsubsection{Transition between power-limited and interference-limited regions}
An interesting feature  is that there may be abrupt changes in the achievable rate of user 2 when we shift from one region to the other, which are due to a jump in the tolerable interference power (as observed in Fig. \ref{fig:CircCoeff}). This can be explained as follows. In the power-limited region the transmit power always equals the power budget when improper signaling is optimal, i.e., $p_2(\kappa_2)=P_2$ if $\kappa_2>0$. However, in the interference-limited region the transmit power is dominated by the function $q(\kappa_2)$ when the power budget exceeds the value of that function, i.e., when $P_2>q(1)$. Hence, by using \eqref{eq:kappa1}, it can be easily seen that $\lim_{\gamma_{2\bar{R}}\rightarrow2P_1^+}q_2(1)=\frac{P_1}{a_{12}}$, so that there may be a jump in the maximum transmit power, i.e., a discontinuity in the maximum transmit power of user 2 as a function of $R_1$, from $p_2=P_2$ to $p_2=\frac{P_1}{a_{12}}$ whenever $P_2>\frac{P_1}{a_{12}}$. That is, the lower the SIR, the more prominent the power jump is, whereas no jump will be observed when the SIR is equal to or greater than 1. This discontinuity in the maximum transmit power implies a similar jump in the maximum achievable rate of user 2, making $\lim_{\gamma_{2\bar{R}}\rightarrow2P_1^+}R_2(\kappa_2)\neq\lim_{\gamma_{2\bar{R}}\rightarrow2P_1^-}R_2(\kappa_2)$. This is illustrated in Figs. \ref{fig:RvsAlpha} and \ref{fig:RvsAlpha2}, where the Pareto boundary of the rate region is depicted for the previously considered examples, i.e., $a_{12}=2$ and $a_{12}=0.8$, respectively. We also depict in the figures the rate region boundary for proper signaling, and the enlargement of the rate region due to time sharing.
\begin{figure}[t!]
\centering
\includegraphics[width=1\columnwidth]{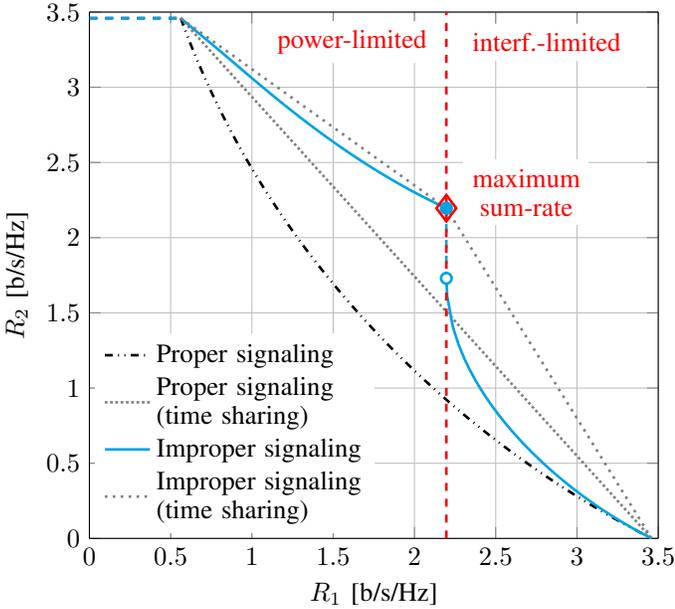}
\caption{Pareto boundary for proper and improper transmissions, with $P_1=P_2=10$ and $a_{12}=2$.}
\label{fig:RvsAlpha}
\end{figure}

Figure \ref{fig:RvsAlpha} corresponds to a scenario where the SIR is below one, therefore we can observe that the Pareto boundary is discontinuous. As explained earlier, this is the result of a jump in the maximum transmit power of user 2, $p_2$, which can also be observed in Fig. \ref{fig:CircCoeff}. Specifically, the maximum transmit power in the interference limited region equals $p_2=\frac{P_1}{a_{12}}=5$, so there is a power jump from $p_2=P_2=10$ to $p_2=5$. Notice that at the discontinuity, which approximately corresponds to $R_1=2.2$ b/s/Hz, all the rates in the half-open interval $R_2\in[1.7,2.3)$ b/s/Hz are achievable but do not belong to the Pareto boundary as defined at the beginning of Section \ref{sec:main}. Similarly, all the rate pairs $R_2=3.4$ b/s/Hz and $R_1\in(0,0.6)$ b/s/Hz are achievable but do not belong to the Pareto boundary. Therefore, these two line segments are plotted in dashed lines in Fig. \ref{fig:RvsAlpha}.

The scenario corresponding to Fig. \ref{fig:RvsAlpha2} presents an SIR greater than one. Because of that, the transition from the power-limited to the interference-limited regions presents no discontinuity as there is no jump in the transmit power at $2P_1=\gamma_{2\bar{R}}$ (see Fig. \ref{fig:CircCoeff2}). Notice that the discontinuity in the maximum transmit power at $R_1=2.9$ b/s/Hz, which is due to $a_{12}$ falling below the threshold, does not cause a discontinuity in the Pareto boundary. As previously explained, this is because at that point $a_{12}=\max[\iota(\alpha),\rho(\alpha)]$ holds, so proper and improper signaling achieve the same rate. Therefore, $R_2$ changes smoothly even though the transmit power and circularity coefficients jump. 
\begin{figure}[t!]
\centering
\includegraphics[width=1\columnwidth]{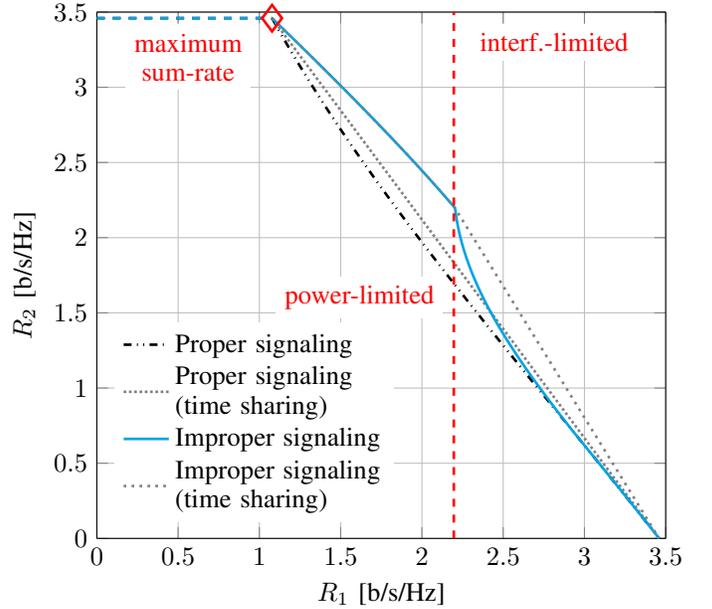}
\caption{Pareto boundary for proper and improper transmissions with $P_1=P_2=10$ and $a_{12}=0.8$.}
\label{fig:RvsAlpha2}
\end{figure}

Figures \ref{fig:RvsAlpha} and \ref{fig:RvsAlpha2} also illustrate that the dependency of $R_2$ on $R_1$ in the power-limited region is different from the interference-limited region. Specifically, $R_2$ decreases more slowly with $R_1$ in the former. This is due to the fact that in these examples $p_2=P_2$ holds in the entire power-limited region, and therefore the decrease in $R_2$ is due to an increase in $\kappa_2$. In the interference-limited region, however, $p_2$ decreases towards 0 as $R_1$ increases, which has a stronger effect on the achievable rate. The transition between these two different behaviors leads to a sharp bend in the Pareto boundary, as observed in Figs. \ref{fig:RvsAlpha} and \ref{fig:RvsAlpha2}. It corresponds to the transition between the power-limited and interference-limited regions only if $P_2>q(1)$ holds in the entire interference-limited region, and improper signaling is optimal in the power-limited region. This is because these conditions imply that $\kappa_2=1$ is optimal at the transition point $2P_1=\gamma_{2\bar{R}}$. Therefore, to satisfy the rate of the first user $p_2$ must decrease once we move into the interference-limited region, which changes the behavior of $R_2$. If $P_2<q(1)$ for some points in the interference-limited region, then $\kappa_2<1$ is optimal at $2P_1=\gamma_{2\bar{R}}$. Increasing $\kappa_2$ while keeping $p_2$ constant permits achieving the required $R_1$ when we move into the interference-limited region until $\kappa_2=1$ is reached. Hence the dependency of $R_2$ on $R_1$ will change at the value of $R_1$ such that $P_2=q(1)$, which is slightly shifted to the right. Figures \ref{fig:RvsAlpha} and \ref{fig:RvsAlpha2} also show the enlargement of the rate region due to improper signaling. In these examples, the interference level is significant, so the achievable rate region by improper signaling is substantially larger than that of proper signaling, especially in the scenario shown in Fig. \ref{fig:RvsAlpha}.

It is also worth highlighting that time sharing provides a substantial enlargement of the rate region for both proper and improper signaling, but there is still a significant gap between the performance of both strategies. In Fig. \ref{fig:RvsAlpha}, the rate region achieved by improper signaling is convexified by time sharing between the extreme points of the Pareto boundary, which correspond to proper signaling transmissions, and the point at the transition between the power-limited and the interference-limited regions, where both users transmit maximally improper signals. By contrast, all the Pareto boundary points in the power-limited region belong to the convex hull in Fig. \ref{fig:RvsAlpha2}, and time sharing improves the performance only in the interference-limited region.

\subsubsection{Operation regimes}
Considering the possibly suboptimal approach of always treating interference as noise, we may distinguish the following operation regimes that account for the optimality of improper signaling.
\begin{itemize}
	\item \textbf{Strictly improper regime:} If $a_{12}\geq\frac{P_1}{P_1+1}$, we say the Z-IC is in the strictly improper regime. In this regime and treating interference as noise, all the Pareto-optimal points satisfying $R_1>\log_2(1+\frac{P_1}{1+P_2a_{12}})$ are achieved by improper signaling, i.e., $\kappa_2>0$.
	\item \textbf{Selective improper/proper regime:} If $\frac{P_1}{P_1+1}>a_{12}>\min_{\alpha}\max[\iota(\alpha),\rho(\alpha)]$, we say the Z-IC is in the selective improper/proper regime. In this regime and treating interference as noise, only a subset of the Pareto-optimal points satisfying $R_1>\log_2(1+\frac{P_1}{1+P_2a_{12}})$ is achieved by improper signaling, i.e., $\kappa_2>0$.
	\item \textbf{Strictly proper regime:} If $a_{12}\leq\min_{\alpha}\max[\iota(\alpha),\rho(\alpha)]$, we say the Z-IC is in the strictly proper regime. In this regime and treating interference as noise, all the Pareto-optimal points are achieved by proper signaling, i.e., $\kappa_2=0$.
\end{itemize}
Note that $(R_1=\log_2(1+\frac{P_1}{1+P_2a_{12}}),R_2=\log_2(1+P_2))$ always belongs to the Pareto boundary and can only be achieved by proper signaling. Because of that, the operation regimes described above explain the optimal strategies for the boundary points satisfying $R_1>\log_2(1+\frac{P_1}{1+P_2a_{12}})$. The strictly improper regime is obtained making use of Lemma \ref{th:lemma2}, which establishes that $a_{12}\geq\mu(\alpha)=1-\frac{P_1}{\gamma_{2\bar{R}}-\gamma_{\bar{R}}}$ is a sufficient condition for improper signaling to be optimal if proper signaling does not allow maximum power transmission, i.e., if $R_1>\log_2(1+\frac{P_1}{1+P_2a_{12}})$. Therefore, $\mu(\alpha)\geq \max[\iota(\alpha),\rho(\alpha)]$ holds for all boundary points satisfying $R_1>\log_2(1+\frac{P_1}{1+P_2a_{12}})$. Furthermore, it can be easily checked that 
\begin{equation}
	\mu(1)=\iota(1)=\frac{P_1}{P_1+1} \, \Rightarrow \, \max\left[\iota(1),\rho(1)\right]=\frac{P_1}{P_1+1} \; .
\end{equation}
Since $\mu(\alpha)$ is an increasing function, the condition for the optimality of improper signaling is fulfilled in the strictly improper regime for all Pareto-optimal points except for $R_1=\log_2(1+\frac{P_1}{1+P_2a_{12}})$. An example of the operation in the strictly improper regime is given in Fig. \ref{fig:RvsAlpha} (and the corresponding Fig. \ref{fig:CircCoeff}). Figures \ref{fig:CircCoeff2} and \ref{fig:RvsAlpha2} correspond to the selective improper/proper regime, and we can observe that improper signaling is optimal only in the interval $R_1\in(1.1,2.9)$ b/s/Hz.

When we move from the power-limited region to the interference-limited region we have
\begin{align}
	&\lim_{\alpha\rightarrow\alpha_0^+}\iota(\alpha)=\frac{1}{4}\frac{\gamma_{\bar{R}}}{\gamma_{\bar{R}}+1}=\frac{1}{4}\left(1-\frac{1}{\sqrt{1+2P_1}}\right) \notag\\ &\Rightarrow \lim_{\alpha\rightarrow\alpha_0^+}\max\left[\iota(\alpha),\rho(\alpha)\right]\geq\frac{1}{4}\left(1-\frac{1}{\sqrt{1+2P_1}}\right) \; ,
\end{align}
where $\alpha_0$ is such that $2P_1=\gamma_{2\bar{R}}$. Therefore, proper signaling is the optimal strategy in the whole interference-limited region if $a_{12}\leq \frac{1}{4}(1-\frac{1}{\sqrt{1+2P_1}})$. In such a case, benefits of improper signaling, if any, are limited to the power-limited region. Notice that, as $P_1$ increases, this condition converges to $a_{12}\leq\frac{1}{4}$.

\subsection{Relationship to previous work}
Finally, we would like to connect our results to related works in the literature. In our previous work \cite{LameiroSantamariaSchreier:2015:Benefits-of-Improper-Signaling-for-Underlay}, we studied a similar scenario in the context of underlay cognitive radio. We considered the Z-IC with the restriction that the first user transmits only proper signals, i.e., $\kappa_1=0$. A characterization of the maximum achievable rate of user 2 was derived in terms of a threshold in $a_{12}$. Specifically, improper signaling was shown to be optimal if $a_{12}>1-\frac{P_1}{\gamma_{2\bar{R}}}$. This threshold is strictly higher than the one obtained for the general Z-IC (it can easily be seen that $1-\frac{P_1}{\gamma_{2\bar{R}}}\geq\mu(\alpha)\geq\iota(\alpha)$). This is in agreement with the fact that, if we let the first user optimize its circularity coefficient, the rate achieved by the second user can only increase.

The Z-IC was also considered in \cite{Kurniawan2015}, and the transmit strategy that maximizes the sum-rate was derived in closed form based on the real-composite model. Although such a model is usually more convenient from an optimization point-of-view, it is not as insightful as the augmented complex model since some of the features of the improper signal are not easily captured. This is the case for the degree of impropriety, which is elegantly given by the circularity coefficient. We would like to stress that in \cite{Kurniawan2015} only one point of the rate region was characterized, whereas in this work we completely characterized the boundary of the rate region. Furthermore, since we consider the augmented complex model, we provided closed-form formulas for the circularity coefficients, thus providing a more insightful description of how improper signaling behaves in this scenario. Nevertheless, some of the conclusions drawn in \cite{Kurniawan2015} fall within our characterization of the rate region boundary. By looking at the structure of the sum-rate maximizing transmit strategies in \cite[Eq. (31)]{Kurniawan2015}, we observe that improper signaling is chosen when $a_{12}>1$. This condition belongs to the strictly improper regime, and the solution presented in \cite[Eq. (31)]{Kurniawan2015} can be seen as a special case of \eqref{eq:kOpt}. Furthermore, according to \cite{Kurniawan2015}, when improper signaling is preferred for the sum-rate maximization, the signal transmitted by one of the users is chosen as maximally improper, and the selection of the user depends on whether or not the condition $P_2a_{12}\leq P_1$ holds. This is in agreement with our previous discussion, where we pointed out that the circularity coefficient of the first user is always equal to or smaller than that of the second user whenever the aforementioned condition is fulfilled. In such a case, the Pareto optimal point that corresponds to the sum-rate maximization is then the point that satisfies $q(1)=P_2$. Otherwise, if $P_2a_{12}> P_1$, the sum-rate is maximized when the first user transmits a maximally improper signal, and in this case the circularity coefficient of the second user is smaller than 1 except for $R_1=\frac{1}{2}\log_2(1+2P_1)$.

The MIMO Z-IC with improper signaling has been considered in \cite{Lagen2014,Lagen2016}. Due to the inclusion of the spatial dimension, an analytical characterization of improper signaling similar to that of Theorem \ref{th:theorem1} becomes intractable. Notice that in the SISO Z-IC, the impropriety is completely described by the circularity coefficient. However, in the MIMO case, the description of the impropriety is by means of the complementary covariance matrix, which is characterized by a set of circularity coefficients and a unitary matrix \cite{Schreier2010}. Because of that, the authors proposed a heuristic scheme to design the complementary covariance matrix of the transmitted signal of user 2 such that it permits easy control of the degree of impropriety. An interesting aspect of \cite{Lagen2014,Lagen2016} is the fact that it does not completely stick to a particular model for the representation of improper signals. Thus, they use the real-composite model to optimize the transmission scheme of user 1, while they design the scheme of the second user by means of the augmented-complex formulation. Interestingly, the achievable rate regions obtained in \cite{Lagen2014,Lagen2016} have a shape similar to those in Figs. \ref{fig:RvsAlpha} and \ref{fig:RvsAlpha2}. This suggests that some of our conclusions for the single-antenna Z-IC might be extended to the more general MIMO Z-IC.

\section{Conclusion}\label{sec:conc}
We have analyzed the benefits of improper signaling in the single-antenna Z-IC. Under the assumption that interference is treated as Gaussian noise, we have derived a complete and insightful characterization of the Pareto boundary of the rate region, and the corresponding transmit powers and circularity coefficients in closed-form. This characterization has been derived by analyzing how the circularity coefficients affect the performance at the different boundary points. Specifically, we have shown that improper signaling is optimal when the interference coefficient exceeds a given threshold that depends on the rate achieved by the interfered user. We have shown that the rate region can be substantially enlarged by using improper signaling, especially when the relative level of interference is high.

\section*{Acknowledgements}
The work of C. Lameiro and P. J. Schreier was supported by the German Research Foundation (DFG) under grant SCHR 1384/6-1. The work of I. Santamaria was supported by the Ministerio de Economia y Competitividad (MINECO), Spain, under projects RACHEL (TEC2013-47141-C4-3-R) and CARMEN (TEC2016-75067-C4-4-R).

The authors would like to thank the anonymous reviewers for their valuable comments, which have helped improve the quality of the paper.

\appendices

\section{proof of Lemma \ref{th:lemma0}}\label{app:lemma0}
Let us first consider $\kappa_1=1$. In this case $R_1(p_2,\kappa_2)\geq\bar{R}$ is equivalent, because of \eqref{eq:R1pk}, to the quadratic inequality
	\begin{align}
		p_2^2a_{12}^2\gamma_{2\bar{R}}\left(1-\kappa_2^2\right)&+p_22a_{12}\left[\gamma_{2\bar{R}}-P_1\left(1+\kappa_2\right)\right]\notag\\&+\gamma_{2\bar{R}}-2P_1\leq0 \; ,
	\end{align}
	which is convex and only has one positive root given by
	\begin{equation}
		p_2=\frac{P_1\left(1+\kappa_2\right)-\gamma_{2\bar{R}}+\left|P_1\left(1+\kappa_2\right)-\gamma_{2\bar{R}}\kappa_2\right|}{a_{12}\gamma_{2\bar{R}}\left(1-\kappa_2^2\right)} \; .\label{eq:p2k10}
	\end{equation}
	Notice that the rate constraint of user 1 can only be fulfilled with a maximally improper signal if $\bar{R}  \leq \frac{1}{2} \log_2(1+2P_1)$, that is, if user 1 can achieve its rate by transmitting all power over the real or imaginary part. This means that $2P_1\geq\gamma_{2\bar{R}}$ holds whenever $\kappa_1=1$, which implies $P_1(1+\kappa_2)-\gamma_{2\bar{R}}\kappa_2\geq0$. Hence, \eqref{eq:p2k10} can be simplified to $p_2=q(\kappa_2)$, where $q(\kappa_2)$ is given by the second branch of \eqref{eq:pk}. If $\kappa_1<1$, $R_1(p_2,\kappa_2)\geq\bar{R}$ is equivalent to the quadratic inequality
	\begin{align}
		p_2^2a_{12}^2&\left[\left(\gamma_{2\bar{R}}+1\right)\left(1-\kappa_2^2\right)-1\right]-p_22a_{12}\left(P_1-\gamma_{2\bar{R}}\right)\notag\\&+\gamma_{2\bar{R}}+1-\left(1+P_1\right)^2\leq0 \; .\label{eq:ineq}
	\end{align}
	The equivalent power constraint will then be given by one of the roots of this equation. In order to determine which one of the two roots must be considered we can use the fact that $R_1(p_2,\kappa_2)$ is decreasing in $p_2$. This means that at least one of the roots will violate the condition $0\leq\kappa_1<1$, with $\kappa_1$ given by \eqref{eq:kappa1}. The roots of \eqref{eq:ineq} are
	\begin{equation}\label{eq:roots}\resizebox{0.49\textwidth}{!}{$
	\begin{aligned}
		&p_2=\frac{1}{a_{12}\left[\left(\gamma_{2\bar{R}}+1\right)\left(1-\kappa_2^2\right)-1\right]}\Bigg[\left(P_1-\gamma_{2\bar{R}}\right)\pm\\ &\left.\sqrt{\left(P_1-\gamma_{2\bar{R}}\right)^2+\left[\left(\gamma_{2\bar{R}}+1\right)\left(1-\kappa_2^2\right)-1\right]\left[\left(1+P_1\right)^2-\gamma_{2\bar{R}}-1\right]}\right] \hspace{-0.1cm}.
	\end{aligned}$}	
	\end{equation}
	Notice that $(1+P_1)^2-\gamma_{2\bar{R}}-1=(1+P_1)^2-(1+P_1)^{2\alpha}\geq0$. Therefore, if \eqref{eq:ineq} is convex, i.e., if $[(\gamma_{2\bar{R}}+1)(1-\kappa_2^2)-1\geq0$, \eqref{eq:ineq} has one negative and one positive root, and the equivalent power constraint will be given by the latter if it satisfies $\kappa_1<1$. Otherwise, the rate expression for $\kappa_1=1$ must be considered. If \eqref{eq:ineq} is concave, both roots can be either positive, negative, or complex. For the last two cases, none of them satisfies $0\leq\kappa_1<1$, so the equivalent power constraint must be obtained through the expression for $\kappa_1=1$. If the two roots are positive, the monotonicity of $R_1(p_2,\kappa_2)$ in $p_2$ is not fulfilled for the largest root, hence the equivalent power constraint is determined by the smallest root. For all these cases, the root that must be considered is obtained by taking the positive square root in \eqref{eq:roots} which, after some manipulations, yields $p_2=q(\kappa_2)$, with $q(\kappa_2)$ being given by the first branch of \eqref{eq:pk}. This concludes the proof.

\section{Proof of Lemma \ref{th:lemma1}}\label{app:lemma1}
Through \eqref{eq:R2}, the derivative of $R_2(\kappa_2)$ with respect to $\kappa_2^2$ is non-negative if
\begin{equation}\label{eq:derR}
	\frac{\partial R_2(\kappa_2)}{\partial\kappa_2^2}\geq0 \; \Leftrightarrow \; 2\frac{\partial q(\kappa_2)}{\partial\kappa_2^2}\left[q(\kappa_2)(1-\kappa_2^2)+1\right]\geq q(\kappa_2)^2 \; .
\end{equation}
Let us first consider $\kappa_1=1$. In this case we have
\begin{equation}\label{eq:derq1}
	\frac{\partial q(\kappa_2)}{\partial\kappa_2^2}=\frac{q(\kappa_2)}{2\kappa_2(1-\kappa_2)} \; .
\end{equation}
Plugging \eqref{eq:derq1} into \eqref{eq:derR} we obtain
\begin{equation}
		\frac{\partial R_2(\kappa_2)}{\partial\kappa_2^2}\geq0 \; \Leftrightarrow \; q(\kappa_2)(1-\kappa_2)+1\geq0 \; ,
	\end{equation}
	which holds for all values of $\kappa_2$. If $\kappa_1<1$, \eqref{eq:ineq} holds with equality for $p_2=q(\kappa_2)$. Therefore, evaluating \eqref{eq:ineq} at $p_2=q(\kappa_2)$ and taking the derivative with respect to $\kappa_2^2$ yields
	\begin{equation}\label{eq:derqm1}\resizebox{0.48\textwidth}{!}{$\begin{aligned}
		\frac{\partial q(\kappa_2)}{\partial\kappa_2^2}=\frac{q(\kappa_2)^2a_{12}(\gamma_{2\bar{R}}+1)}{2q(\kappa_2)a_{12}\left[\left(\gamma_{2\bar{R}}+1\right)\left(1-\kappa_2^2\right)-1\right]-2\left(P_1-\gamma_{2\bar{R}}\right)} \; .\end{aligned}$}
	\end{equation}	
	Plugging \eqref{eq:derqm1} into \eqref{eq:derR} we obtain
	\begin{align}
		&\frac{\partial R_2(\kappa_2)}{\partial\kappa_2^2}\geq0 \; \Leftrightarrow \; \notag\\ &\frac{a_{12}(\gamma_{2\bar{R}}+1)\left[q(\kappa_2)(1-\kappa_2^2)+1\right]}{q(\kappa_2)a_{12}\left[\left(\gamma_{2\bar{R}}+1\right)\left(1-\kappa_2^2\right)-1\right]-\left(P_1-\gamma_{2\bar{R}}\right)}\geq 1 \; .
	\end{align}
	The denominator in this expression can be shown to be positive as follows. Since we are assuming that $\kappa_1<1$ holds, we have, through \eqref{eq:pk}, that $q(\kappa_2)a_{12}[(\gamma_{2\bar{R}}+1)(1-\kappa_2^2)-1]-(P_1-\gamma_{2\bar{R}})=\sqrt{(\gamma_{2\bar{R}}+1)[P_1^2(1-\kappa_2^2)+(\gamma_{2\bar{R}}-2P_1)\kappa_2^2]}$, hence it is positive. Consequently, we obtain
	\begin{equation}\label{eq:lemma1}
		\frac{\partial R_2(\kappa_2)}{\partial\kappa_2^2}\geq0 \; \Leftrightarrow \; a_{12}\geq\frac{\gamma_{2\bar{R}}-P_1-\bar{q}(\kappa_2)}{\gamma_{2\bar{R}}+1} \; ,
	\end{equation}
	where $\bar{q}(\kappa_2)=q(\kappa_2)a_{12}$, which does not depend on $a_{12}$ as can be seen from \eqref{eq:pk}. Let $\hat{\kappa}_2$ be such that the right-hand side of \eqref{eq:lemma1} holds with equality. Since $\frac{\partial q(\kappa_2)}{\partial\kappa_2^2}>0$ for $q(\kappa_2)>0$, $\bar{q}(\kappa_2)$ increases with $\kappa_2$ and, consequently, the right-hand side of \eqref{eq:lemma1} holds with strict inequality when $\kappa_2>\hat{\kappa}_2$. As a result, the derivative of $R_2(\kappa_2)$ is positive whenever $\kappa_2>\hat{\kappa}_2$, which concludes the proof.

\section{Proof of Lemma \ref{th:lemma2}}\label{app:lemma2}
By Lemma \ref{th:lemma1}, $R_2(\kappa_2)$ increases monotonically in $\kappa_2$ as long as $\frac{\partial R_2(\kappa_2)}{\partial \kappa_2^2}\geq 0$ for $\kappa_2=0$. Clearly, $\kappa_1$ is also zero at this point, thus we can apply \eqref{eq:lemma1}, yielding
\begin{align}
		 a_{12}\geq\frac{\gamma_{2\bar{R}}-P_1-\bar{q}(0)}{\gamma_{2\bar{R}}+1}&=\frac{\gamma_{2\bar{R}}-P_1-\left(\frac{P_1}{\gamma_{\bar{R}}}-1\right)}{\gamma_{2\bar{R}}+1}\notag\\&=1-\frac{P_1}{\gamma_{2\bar{R}}-\gamma_{\bar{R}}}=\mu_1(\alpha) \; ,
	\end{align}
	which proves the first case. In the second case, improper signaling outperforms proper signaling only for $\kappa_2>\tilde{\kappa}_2$. By Lemma \ref{th:lemma1} we also know that, if improper signaling outperforms proper signaling for $\kappa_2=\kappa_2'$, then the rate improvement will be strictly positive for $\kappa_2>\kappa_2'$. Therefore, when $R_2(1)>R_2(0)$ holds, there must be $\tilde{\kappa}_2$ such that $R_2(\tilde{\kappa}_2)=R_2(0)$ and $R_2(\kappa_2)>R_2(0)$ for all $\kappa_2>\tilde{\kappa}_2$. To evaluate when $R_2(1)>R_2(0)$ holds, we first consider the boundary points satisfying $2P_1\geq\gamma_{2\bar{R}}$. Since at these points user 1 can achieve its required rate with a maximally improper signal, i.e., with $\kappa_1=1$, we have that $q(1)=\infty$. Consequently, $R_2(1)>R_2(0)$ is always fulfilled for $a_{12}>0$, which yields the second branch of \eqref{eq:mu2}. For the boundary points satisfying $2P_1<\gamma_{2\bar{R}}$, $\kappa_1<1$ holds. Therefore, using the corresponding branch in \eqref{eq:pk} for $\kappa_2=1$ yields
\begin{equation}
	q(1)=\frac{1}{a_{12}}\left[\gamma_{2\bar{R}}-P_1-\sqrt{\left(\gamma_{2\bar{R}}+1\right)\left(\gamma_{2\bar{R}}-2P_1\right)}\right] \; .
\end{equation}
Using this expression and \eqref{eq:R2}, $R_2(1)>R_2(0)$ holds if
\begin{align}
	\frac{2}{a_{12}}&\left[\gamma_{2\bar{R}}-P_1-\sqrt{\left(\gamma_{2\bar{R}}+1\right)\left(\gamma_{2\bar{R}}-2P_1\right)}\right]\notag\\&>\frac{1}{a_{12}}\left(\frac{P_1}{\gamma_{\bar{R}}}-1\right)\left[\frac{1}{a_{12}}\left(\frac{P_1}{\gamma_{\bar{R}}}-1\right)+2\right] \; ,
\end{align}
which yields
\begin{equation}
	a_{12}>\frac{\left(\frac{P_1}{\gamma_{\bar{R}}}-1\right)^2}{2\left[\gamma_{2\bar{R}}-P_1-\sqrt{\left(\gamma_{2\bar{R}}+1\right)\left(\gamma_{2\bar{R}}-2P_1\right)}-\frac{P_1}{\gamma_{\bar{R}}}+1\right]} \; .
\end{equation}
Taking into account that $\gamma_{2\bar{R}}+1=\left(\gamma_{\bar{R}}+1\right)^2$, we have
\begin{equation}\resizebox{0.48\textwidth}{!}{$
\begin{aligned}
	a_{12}>&\frac{\left(P_1-\gamma_{\bar{R}}\right)^2}{\gamma_{\bar{R}}\left(\gamma_{\bar{R}}+1\right)\left[\left(\gamma_{2\bar{R}}-2P_1\right)+\gamma_{2\bar{R}}-2\gamma_{\bar{R}}-2\gamma_{\bar{R}}\sqrt{\gamma_{2\bar{R}}-2P_1}\right]} \\
	&=\frac{\left(P_1-\gamma_{\bar{R}}\right)^2}{\left(\gamma_{2\bar{R}}-\gamma_{\bar{R}}\right)\left(\sqrt{\gamma_{2\bar{R}}-2P_1}-\gamma_{\bar{R}}\right)^2}=\iota(\alpha) \; ,
\end{aligned}$}
\end{equation}
which yields the first branch of \eqref{eq:mu2}. Finally, if $a_{12}\leq\iota(\alpha)$ then we have $R_2(1)\leq R_2(0)$ and, by Lemma \ref{th:lemma1}, $R_2(\kappa_2)\leq R_2(0)$ for all $\kappa_2$. This yields the third case and concludes the proof.

\section{Proof of Theorem \ref{th:theorem1}}\label{app:theorem1}
By Lemma \ref{th:lemma2}, improper signaling can only be optimal if $a_{12}>\iota(\alpha)$. However, this condition is not sufficient, since it does not take the power budget constraint into account. Firstly, since increasing $\kappa_2$ can improve the rate only if this permits increasing the transmit power as well, we must have that $q(0)<P_2$. This is equivalent to
\begin{equation}\label{eq:thProp}
	q(0)=\frac{1}{a_{12}}\left(\frac{P_1}{\gamma_{\bar{R}}}-1\right)<P_2 \, \Leftrightarrow \, a_{12}>\frac{1}{P_2}\left(\frac{P_1}{\gamma_{\bar{R}}}-1\right) \; .
\end{equation}
If $a_{12}$ is greater than this quantity and also than $\iota(\alpha)$, we know by Lemma \ref{th:lemma2} that $\kappa_2=1$ is optimal if $q(1)\leq P_2$. Otherwise, it is clear that, if improper signaling is optimal, then the optimal circularity coefficient $\kappa_2$ will satisfy $p(\kappa_2)=P_2$. Let us denote this circularity coefficient as $\kappa_2=\kappa_{\max}$. Then, using \eqref{eq:R2} we have
\begin{align}
	R_2(\kappa_{\max}&)\geq R_2(0) \, \Leftrightarrow \, \notag\\&\kappa_{\max}^2\leq 1-\frac{1}{P_2^2}\left[1+\frac{1}{a_{12}}\left(\frac{P_1}{\gamma_{\bar{R}}}-1\right)\right]^2+\frac{2P_2+1}{P_2^2} \; .\label{eq:kcond}
\end{align} 
Therefore, we have to evaluate the condition for $\kappa_{\max}$, with $p(\kappa_{\max})=P_2$, fulfilling the above expression. Let us first consider the case $\kappa_1=1$. From \eqref{eq:pk}, we have that
\begin{equation}
	q(\kappa_{\max})=P_2 \, \Leftrightarrow \, \kappa_{\max}=1-\frac{1}{a_{12}P_2}\left(\frac{2P_1}{\gamma_{2\bar{R}}}-1\right) \; .
\end{equation}  
Plugging this expression into \eqref{eq:kcond} we have the condition
\begin{align}
	&\left[1-\frac{1}{a_{12}P_2}\left(\frac{2P_1}{\gamma_{2\bar{R}}}-1\right)\right]^2\leq \notag\\&1-\frac{1}{P_2^2}\left[1+\frac{1}{a_{12}}\left(\frac{P_1}{\gamma_{\bar{R}}}-1\right)\right]^2+\frac{2P_2+1}{P_2^2} \; ,
\end{align}
which, after some manipulations, yields the quadratic inequality
\begin{equation}
	a_{12}^2+a_{12}\left(P_2a_I-a_P\right)-\frac{1}{2}P_2\left(a_I^2+a_P^2\right)\geq 0 \; ,
\end{equation}
where $a_P$ and $a_I$ are respectively given by \eqref{eq:aP} and \eqref{eq:aI}. This quadratic expression is convex and has at most one positive root. Hence, the condition on $a_{12}$ is given by the largest root, which is
\begin{equation}\label{eq:thPow1}
	a_{12}>\frac{a_P-P_2a_I+\sqrt{(a_P-P_2a_I)^2+2P_2(a_I^2+a_P^2)}}{2} \; .
\end{equation} 
This expression, however, is only applicable if $\kappa_1=1$ holds for the above threshold. Using \eqref{eq:kappa1}, $\kappa_1=1$ holds if
\begin{equation}\label{eq:thPow2}
	\frac{a_P-P_2a_I+\sqrt{(a_P-P_2a_I)^2+2P_2(a_I^2+a_P^2)}}{2}\geq \frac{P_1}{P_2}+a_I \; .
\end{equation}
Otherwise we have $\kappa_1<1$, thus we have to obtain $\kappa_{\max}$ using the first branch in \eqref{eq:pk} (or, alternatively, taking equality in \eqref{eq:ineq} with $p_2=P_2$), which yields
\begin{equation}\resizebox{0.49\textwidth}{!}{$
\begin{aligned}
	&q(\kappa_{\max})=P_2 \, \Leftrightarrow \, \\&\kappa_{\max}^2=\frac{P_2a_{12}\left[P_2a_{12}\gamma_{2\bar{R}}-2\left(P_1-\gamma_{2\bar{R}}\right)\right]+\gamma_{2\bar{R}}+1-\left(1+P_1\right)^2}{P_2^2a_{12}^2\left(\gamma_{2\bar{R}}+1\right)} \; .
\end{aligned}$}
\end{equation}
Plugging this expression into \eqref{eq:kcond} we obtain
\begin{align}
	&\frac{P_2a_{12}\left[P_2a_{12}\gamma_{2\bar{R}}-2\left(P_1-\gamma_{2\bar{R}}\right)\right]+\gamma_{2\bar{R}}+1-\left(P_1+1\right)^2}{P_2^2a_{12}^2\left(\gamma_{2\bar{R}}+1\right)}\leq \notag\\&1-\frac{1}{P_2^2}\left[1+\frac{1}{a_{12}}\left(\frac{P_1}{\gamma_{\bar{R}}}-1\right)\right]^2+\frac{2P_2+1}{P_2^2} \; ,
\end{align}
which, after some manipulations, yields
\begin{align}\label{eq:Pineq}
	&P_2^2\frac{1}{\gamma_{2\bar{R}}+1}+2P_2\left[\frac{P_1-\gamma_{2\bar{R}}}{a_{12}\left(\gamma_{2\bar{R}}+1\right)}+1\right]\notag\\&+\frac{\left(1+P_1\right)^2-\gamma_{2\bar{R}}-1}{a_{12}^2\left(\gamma_{2\bar{R}}+1\right)}+1-\left[1+\frac{1}{a_{12}}\left(\frac{P_1}{\gamma_{\bar{R}}}-1\right)\right]^2\geq 0 \; .
\end{align}
Since the above quadratic expression is convex, we have to consider the largest root. As expected, one of the roots is
\begin{equation}
	P_2=\frac{1}{a_{12}}\left(\frac{P_1}{\gamma_{\bar{R}}}-1\right) \; ,
\end{equation}
which corresponds to the maximum power allowed by proper signaling. Since we have already considered this threshold in \eqref{eq:thProp}, we have to use the second root, which can be simplified to
\begin{equation}
	P_2=\frac{1}{a_{12}}\left\{2\left(\gamma_{2\bar{R}}+1\right)\left(1-a_{12}\right)-\left[\frac{P_1\left(2\gamma_{\bar{R}}+1\right)}{\gamma_{\bar{R}}}+1\right]\right\} \; .
\end{equation}
If the above root is the largest one, the inequality \eqref{eq:Pineq} is satisfied when $P_2$ exceeds this value, which yields the condition
\begin{equation}\label{eq:thPow3}
	a_{12}>\frac{\gamma_{\bar{R}}\left(2\gamma_{2\bar{R}}+1\right)-P_1\left(2\gamma_{\bar{R}}+1\right)}{\gamma_{\bar{R}}\left[P_2+2\left(\gamma_{2\bar{R}}+1\right)\right]} \; .
\end{equation}
Combining \eqref{eq:thProp}-\eqref{eq:thPow3} we obtain that $a_{12}>\rho(\alpha)$ must hold for improper signaling to be optimal, with $\rho(\alpha)$ given by \eqref{eq:rho}. However, this condition is only valid if $a_{12}>\iota(\alpha)$. Consequently, improper signaling will be optimal if and only if $a_{12}$ is greater than the dominating threshold, i.e., if and only if $a_{12}>\max[\iota(\alpha),\rho(\alpha)]$. If this expression is satisfied, then by Lemma \ref{th:lemma2} $\kappa_2$ must be increased until $p(\kappa_2)=P_2$, thus the optimal circularity coefficient is given by \eqref{eq:kOpt}. This concludes the proof.

\bibliography{ZchannelImproperPU}

\begin{thebibliography}{10}
\providecommand{\url}[1]{#1}
\csname url@samestyle\endcsname
\providecommand{\newblock}{\relax}
\providecommand{\bibinfo}[2]{#2}
\providecommand{\BIBentrySTDinterwordspacing}{\spaceskip=0pt\relax}
\providecommand{\BIBentryALTinterwordstretchfactor}{4}
\providecommand{\BIBentryALTinterwordspacing}{\spaceskip=\fontdimen2\font plus
\BIBentryALTinterwordstretchfactor\fontdimen3\font minus
  \fontdimen4\font\relax}
\providecommand{\BIBforeignlanguage}[2]{{%
\expandafter\ifx\csname l@#1\endcsname\relax
\typeout{** WARNING: IEEEtran.bst: No hyphenation pattern has been}%
\typeout{** loaded for the language `#1'. Using the pattern for}%
\typeout{** the default language instead.}%
\else
\language=\csname l@#1\endcsname
\fi
#2}}
\providecommand{\BIBdecl}{\relax}
\BIBdecl

\bibitem{Neeser1993}
F.~D. Neeser and J.~L. Massey, ``Proper complex random processes with
  applications to information theory,'' \emph{IEEE Transactions on Information
  Theory}, vol.~39, no.~4, pp. 1293--1302, Jul. 1993.

\bibitem{Cadambe2010}
V.~Cadambe, S.~Jafar, and C.~Wang, ``Interference alignment with asymmetric
  complex signaling---settling the {H{\o}st-Madsen-Nosratinia} conjecture,''
  \emph{IEEE Transactions on Information Theory}, vol.~56, no.~9, pp.
  4552--4565, Sep. 2010.

\bibitem{LameiroSantamaria:2013:Degrees-of-Freedom-for-the-4-User-SISO-Interference}
C.~Lameiro and I.~Santamar{\'\i}a, ``Degrees-of-freedom for the 4-user {SISO}
  interference channel with improper signaling,'' in \emph{{P}roc.\ {IEEE}
  {I}nt.\ {C}onf.\ {C}omm.}, Budapest, Hungary, Jun. 2013.

\bibitem{Wang2011}
C.~Wang, T.~Gou, and S.~A. Jafar, ``On optimality of linear interference
  alignment for the three-user {MIMO} interference channel with constant
  channel coefficients,'' \emph{[Online]. Available:
  http://escholarship.org/uc/item/6t14c361}, 2011.

\bibitem{Shin2012}
H.-Y. Shin, S.-H. Park, H.~Park, and I.~Lee, ``A new approach of interference
  alignment through asymmetric complex signaling and multiuser diversity,''
  \emph{IEEE Transactions on Wireless Communications}, vol.~11, no.~3, pp.
  880--884, Mar. 2012.

\bibitem{Yang2014}
L.~Yang and W.~Zhang, ``Interference alignment with asymmetric complex
  signaling on {MIMO} {X} channels,'' \emph{IEEE Transactions on
  Communications}, vol.~62, no.~10, pp. 3560--3570, Oct. 2014.

\bibitem{Ho2012}
Z.~Ho and E.~Jorswieck, ``Improper {G}aussian signaling on the two-user {SISO}
  interference channel,'' \emph{IEEE Transactions on Wireless Communications},
  vol.~11, no.~9, pp. 3194--3203, Sep. 2012.

\bibitem{Zeng2013}
Y.~Zeng, C.~Yetis, E.~Gunawan, Y.~Guan, and R.~Zhang, ``Transmit optimization
  with improper {G}aussian signaling for interference channels,'' \emph{IEEE
  Transactions on Signal Processing}, vol.~61, no.~11, pp. 2899--2913, Jun.
  2013.

\bibitem{Zeng2013twc}
Y.~Zeng, R.~Zhang, E.~Gunawan, and Y.~Guan, ``Optimized transmission with
  improper {G}aussian signaling in the {$K$}-user {MISO} interference
  channel,'' \emph{IEEE Transactions on Wireless Communications}, vol.~12,
  no.~12, pp. 6303--6313, Dec. 2013.

\bibitem{Nguyen2015}
H.~Nguyen, R.~Zhang, and S.~Sun, ``Improper signaling for symbol error rate
  minimization in {$K$}-user interference channel,'' \emph{IEEE Transactions on
  Communications}, vol.~63, no.~3, pp. 857--869, Mar. 2015.

\bibitem{LagenMorancho2015}
S.~Lagen, A.~Agustin, and J.~Vidal, ``Coexisting linear and widely linear
  transceivers in the {MIMO} interference channel,'' \emph{IEEE Transactions on
  Signal Processing}, vol.~64, no.~3, pp. 652--664, Feb. 2016.

\bibitem{Hellings2013}
C.~Hellings, M.~Joham, and W.~Utschick, ``{QoS} feasibility in {MIMO} broadcast
  channels with widely linear transceivers,'' \emph{IEEE Signal Processing
  Letters}, vol.~20, no.~11, pp. 1134--1137, Nov. 2013.

\bibitem{Kim2012}
C.~Kim, E.-R. Jeong, Y.~Sung, and Y.~H. Lee, ``Asymmetric complex signaling for
  full-duplex decode-and-forward relay channels,'' in \emph{Proceedings of the
  International Conference on ICT Convergence (ICTC)}, Oct. 2012, pp. 28--29.

\bibitem{LameiroSantamariaSchreier:2015:Benefits-of-Improper-Signaling-for-Underlay}
C.~Lameiro, I.~Santamar{\'\i}a, and P.~J. Schreier, ``Benefits of improper
  signaling for underlay cognitive radio,'' \emph{IEEE Wireless Communications
  Letters}, vol.~4, pp. 22---25, Feb. 2015.

\bibitem{LameiroSantamariaSchreier:2015:Analysis-of-maximally-improper-signalling}
C.~Lameiro, I.~Santamar{\'\i}a, and P.~Schreier, ``Analysis of maximally
  improper signalling schemes for underlay cognitive radio,'' in
  \emph{Proceedings of the IEEE International Conference on Communications},
  London, UK, Jun. 2015.

\bibitem{Amin2015}
O.~Amin, W.~Abediseid, and M.-S. Alouini, ``Outage performance of cognitive
  radio systems with improper {G}aussian signaling,'' in \emph{Proceedings of
  the IEEE International Symposium on Information Theory (ISIT)}, Hong Kong,
  China, Jun. 2015, pp. 1851--1855.

\bibitem{Kurniawan2015}
E.~Kurniawan and S.~Sun, ``Improper {G}aussian signaling scheme for the
  {Z}-interference channel,'' \emph{IEEE Transactions on Wireless
  Communications}, vol.~14, no.~7, pp. 3912--3923, Jul. 2015.

\bibitem{Lagen2014}
S.~Lagen, A.~Agustin, and J.~Vidal, ``Improper {G}aussian signaling for the
  {Z}-interference channel,'' in \emph{Proceedings of the IEEE International
  Conference on Acoustics, Speech and Signal Processing (ICASSP)}, Florence,
  Italy, May 2014, pp. 1145--1149.

\bibitem{Lagen2016}
------, ``On the superiority of improper {G}aussian signaling in wireless
  interference {MIMO} scenarios,'' \emph{IEEE Transactions on Communications},
  vol.~64, no.~8, pp. 3350--3368, Aug. 2016.

\bibitem{Schreier2010}
P.~J. Schreier and L.~L. Scharf, \emph{Statistical signal processing of
  complex-valued data: the theory of improper and noncircular signals}.\hskip
  1em plus 0.5em minus 0.4em\relax Cambridge, U.K.: Cambridge Univ. Press,
  2010.

\bibitem{Adali2011}
T.~Adali, P.~J. Schreier, and L.~L. Scharf, ``Complex-valued signal processing:
  The proper way to deal with impropriety,'' \emph{IEEE Transactions on Signal
  Processing}, vol.~59, no.~11, pp. 5101--5125, Nov. 2011.

\bibitem{Taubock2012}
G.~Taub\"ock, ``Complex-valued random vectors and channels: Entropy,
  divergence, and capacity,'' \emph{IEEE Transactions on Information Theory},
  vol.~58, no.~5, pp. 2729--2744, May 2012.

\bibitem{Gerstacker2003}
W.~Gerstacker, R.~Schober, and A.~Lampe, ``Receivers with widely linear
  processing for frequency-selective channels,'' \emph{IEEE Transactions on
  Communications}, vol.~51, no.~9, pp. 1512--1523, Sep. 2003.

\bibitem{Buzzi2006}
S.~Buzzi, M.~Lops, and S.~Sardellitti, ``Widely linear reception strategies for
  layered space-time wireless communications,'' \emph{IEEE Transactions on
  Signal Processing}, vol.~54, no.~6, pp. 2252--2262, Jun. 2006.

\bibitem{Chevalier2006}
P.~Chevalier and F.~Pipon, ``New insights into optimal widely linear array
  receivers for the demodulation of {BPSK}, {MSK}, and {GMSK} signals corrupted
  by noncircular interferences---application to {SAIC},'' \emph{IEEE
  Transactions on Signal Processing}, vol.~54, no.~3, pp. 870--883, Mar. 2006.

\bibitem{Schober2004}
R.~Schober, W.~Gerstacker, and L.-J. Lampe, ``Data-aided and blind stochastic
  gradient algorithms for widely linear {MMSE MAI} suppression for {DS-CDMA},''
  \emph{IEEE Transactions on Signal Processing}, vol.~52, no.~3, pp. 746--756,
  Mar. 2004.

\bibitem{Han2012}
B.~W. Han and J.~H. Cho, ``Capacity of second-order cyclostationary complex
  {G}aussian noise channels,'' \emph{IEEE Transactions on Communications},
  vol.~60, no.~1, pp. 89--100, Jan. 2012.

\bibitem{Yeo2014}
J.~Yeo and J.~H. Cho, ``Asymptotic frequency-shift properizer for block
  processing of improper-complex second-order cyclostationary random
  processes,'' \emph{IEEE Transactions on Information Theory}, vol.~60, no.~7,
  pp. 4083--4100, Jul. 2014.

\bibitem{Yeo2015}
J.~Yeo, J.~H. Cho, and J.~S. Lehnert, ``Joint transmitter and receiver
  optimization for improper-complex second-order stationary data sequence,''
  \emph{Journal of Communications and Networks}, vol.~17, no.~1, pp. 1--11,
  Feb. 2015.

\bibitem{Yeo2015_2}
J.~Yeo, B.~W. Han, J.~H. Cho, and J.~S. Lehnert, ``Capacity of an orthogonal
  overlay channel,'' \emph{IEEE Transactions on Wireless Communications},
  vol.~14, no.~11, pp. 6111--6124, Nov. 2015.

\bibitem{Kim2016}
M.~Kim, J.~H. Cho, and J.~S. Lehnert, ``Asymptotically optimal low-complexity
  {SC-FDE} with noise prediction in data-like improper-complex interference,''
  \emph{IEEE Transactions on Wireless Communications}, vol.~15, no.~3, pp.
  2090--2103, Mar. 2016.

\bibitem{Costa1985}
M.~Costa, ``On the {G}aussian interference channel,'' \emph{IEEE Transactions
  on Information Theory}, vol.~31, no.~5, pp. 607--615, Sep. 1985.

\bibitem{Prasad2015}
R.~Prasad, S.~Bhashyam, and A.~Chockalingam, ``On the sum-rate of the
  {G}aussian {MIMO Z} channel and the {G}aussian {MIMO X} channel,'' \emph{IEEE
  Transactions on Communications}, vol.~63, no.~2, pp. 487--497, Feb. 2015.

\bibitem{Sato1981}
H.~Sato, ``The capacity of the {G}aussian interference channel under strong
  interference,'' \emph{IEEE Transactions on Information Theory}, vol.~27,
  no.~6, pp. 786--788, Nov. 1981.

\bibitem{SongChoi2014}
H.~Song, J.~Y. Ryu, and W.~Choi, ``Characterization of the {P}areto boundary
  for the two-user symmetric {G}aussian interference channel,'' \emph{IEEE
  Transactions on Communications}, vol.~62, no.~8, pp. 2812--2824, Aug. 2014.

\bibitem{Hellings2015}
C.~Hellings and W.~Utschick, ``Block-skew-circulant matrices in complex-valued
  signal processing,'' \emph{IEEE Transactions on Signal Processing}, vol.~63,
  no.~8, pp. 2093--2107, Apr. 2015.

\bibitem{Jorswieck2008}
E.~A. Jorswieck, E.~G. Larsson, and D.~Danev, ``Complete characterization of
  the pareto boundary for the {MISO} interference channel,'' \emph{IEEE
  Transactions on Signal Processing}, vol.~56, no.~10, pp. 5292--5296, Oct.
  2008.

\end{thebibliography}
\bibliographystyle{IEEEtran}

\end{document}